\newtheorem{theorem}{Theorem}[section]
\newtheorem{lemma}[theorem]{Lemma}
\newtheorem{proposition}[theorem]{Proposition}
\newtheorem{formula}{Formula}
\newcommand{\res}{\mathrm{Res}}
\newcommand{\ud}{\mathrm{d}}
\begin{document}
\newgeometry{margin=2cm,top=2cm,bottom=2cm}

\title{
        Some pricing tools for the Variance Gamma model}

\author{Jean-Philippe Aguilar 
        \thanks{BRED Banque Populaire, 18 Quai de la R\^ap\'ee, FR-75012 Paris, 
        Email: jean-philippe.aguilar@bred.fr}}
\date{June 02, 2020}

\maketitle
\thispagestyle{empty}

\begin{abstract}
We establish several closed pricing formula for various path-independent payoffs, under an exponential L\'evy model driven by the Variance Gamma process. These formulas take the form of quickly convergent series and are obtained via tools from Mellin transform theory as well as from multidimensional complex analysis. Particular focus is made on the symmetric process, but extension to the asymmetric process is also provided. Speed of convergence and comparison with numerical methods (Fourier transform, quadrature approximations, Monte Carlo simulations) are also discussed; notable feature is the accelerated convergence of the series for short term options, which constitutes an interesting improvement of numerical Fourier inversion techniques. 

\bigskip

\noindent {\bfseries Keywords:} L\'evy Process; Variance Gamma Process; Stochastic Volatility; Option Pricing.

\noindent {\bfseries AMS subject classifications (MSC 2020):} 60E07, 60E10, 60H35, 65C30, 65T50, 91G20, 91G30.

\noindent {\bfseries JEL Classifications:} C00, C02, G10, G13.
\end{abstract}

\newpage
\setcounter{page}{1}

\section{Introduction}\label{sec:intro}

L\'evy processes are well-known in financial modeling for their ability to reproduce realistic features of asset return distributions, such as skewness or leptokurtosis, that Gaussian models fail to describe (see \cite{Bertoin96} and \cite{Cont04} for a complete overview of L\'evy processes and their applications to quantitative finance). However, such processes having stationary increments, the implied volatility surfaces they generate remain constant over time (as functions of moneyness and time horizon); this is in contradiction with the widely observed fact that market volatility appears to be non constant, notably because large changes in asset prices tend to cluster together, resulting in alternating periods of high and low variance (see a comprehensive discussion in \cite{Cont07}). 

In order to model this complex volatility behavior, several approaches have been introduced. A natural idea is to describe volatility itself by a positive stochastic process: such models can be either time-continuous,  like in \cite{Heston93}, or discrete, like in \cite{Heston00}. Behavioral and agents-based models have also been proposed, for instance in \cite{Lux00}, and focus on providing an economic explanation of clustering and regime switching phenomena. Another fruitful approach is to introduce a time change in the L\'evy process describing asset prices - see a discussion in \cite{Barndorff-Nielsen02,Carr03} for instance: the stochastic process governing the evolution of time is called "stochastic clock" (following the terminology of \cite{Geman09}) or "business time", while the L\'evy process driving the asset's returns is said to evolve in "operational time". A nice feature of this category of models is their direct interpretation, the introduction of a random clock traducing the fact that trading activities are not uniform in time, but on the contrary display an alternation of peak and less busy periods; as a (drifted) Brownian motion whose time follows a Gamma process, the Variance Gamma process (see all details in \cite{Madan98}) belongs to this category.


For the purpose of derivatives pricing, the Variance Gamma process is typically implemented within the framework of exponential L\'evy models, such as described for instance in \cite{Schoutens03,Tankov10}, and we will therefore speak of the (exponential) Variance Gamma model; such a model has been successfully tested on real market data and be shown to perform better than Black-Scholes or Jump-Diffusion models in multiple situations (e.g., for European-style options on the HSI index in \cite{Lam02} or for currency options in \cite{Madan05}; see also \cite{Seneta04} for a detailed fitting procedure). Let us also mention that several extensions of the Variance Gamma model have been introduced and shown to provide a good fit to observed market prices, for example by subordinating bivariate or multivariate Brownian motions in \cite{Luciano06,Semeraro08}, with applications to basket options calibration on the DJIA index in \cite{Linders15}; other recent extensions include, for instance, the possibility of negative jumps in the linear drift rate of the price process in \cite{Ivanov18}. 

Pricing contingent claims in the Variance Gamma model, however, remains a complicated task, which can be a limitation to its popularity. Some exact formulas have been derived in particular cases (a Black-Scholes like formula for the European call involving products of Bessel and hypergeometric functions in \cite{Madan98}, or generalized hyperbolic functions in \cite{Ivanov18}) but, in practice, numerical methods are favored. Particularly popular are methods based on evaluation of Fourier integrals (see the classic references \cite{Carr99,Lewis01}), because the characteristic function of the Variance Gamma process is known exactly, and admits a relatively simple form; other recent techniques include Fourier-related (COS) transforms for European options \citep{Fang08}, Gauss-Laguerre quadrature for the probability densities \citep{Loregian12}, local basis frame projection method for both vanilla (European and American) and exotic options \citep{Kirkby15,Kirkby18}, or space-time discretization of the time process \citep{Cantarutti18}. Let us also mention approaches based on PIDE and their resolution with finite difference schemes as introduced by \cite{Cont05}, and, of course, Monte Carlo simulations and their variance reduction extensions (see the classic monograph \cite{Glasserman04} and references therein).

In this article, we would like to show that it is possible to obtain tractable closed pricing formulas for European or digital prices, but also for more exotic payoffs (gap, power, log options), thanks to a remarkable factorization property in the Mellin space. More precisely, in this representation, it is possible to express the price of a path-independent option as the product of the Mellin transforms of the Variance Gamma density and of the option's payoff. Let us remark that it is not the first time that the Mellin transform is used in quantitative finance: for instance, in the usual Black-Scholes framework, it has been employed to establish pricing formulas for single asset or basket European and American options (in the latter case, in terms of the optimal exercise price) by \cite{Panini04,Frontczak08}. The Mellin transform is also closely related to the bilateral Laplace transform, which has found many applications in option valuation; let us mention European option pricing in time-changed Brownian models (Variance Gamma, Normal inverse Gaussian of generalized hyperbolic models) in \cite{Xing17}, or Geometric Asian option pricing in affine stochastic volatility models in \cite{Hubalek17}.

Although based on the Mellin transform, our method differs from the approaches described above, because our purpose is not to compute numerically the Mellin integral, but to express it under the form of quickly convergent series of multidimensional residues; moreover, it turns out that the series terms resume to powers of the model's parameters, and are straightforward to compute. We will particularly focus on the symmetric model (i.e., when the price process is a Brownian motion without drift), but will also show how to extend the technique to a drifted process. The technology, based on the Mellin factorization property and on residue calculation, has been previously implemented in exponential L\'evy models driven by spectrally negative processes (Black-Scholes or Finite Moment Log Stable models) in \cite{AK19,Aguilar19}; in the present article, we will demonstrate that the technique is also well suited to two-sided processes. In particular, we will see that the obtained series converge extremely fast for very short maturities, which constitutes an advantage over numerical tools (as noticed in \cite{Carr99}, very short maturities create highly oscillatory integrands which considerably slow down the numerical Fourier inversion process).

The paper is organized as follows: in section~\ref{sec:model}, we start by recalling fundamental concepts on the Variance Gamma process, and on option pricing in exponential L\'evy models; then, in section~\ref{sec:symmetricVG}, we establish a generic pricing formula in the case of a symmetric process, which allows to determine closed-form series formulas for the price of several options. In section~\ref{sec:asymmetricVG}, we extend this pricing formula to the case of an asymmetric process, and apply it to the case of a digital option. In section \ref{sec:tests}, we compare our results with several numerical techniques, such as Fourier inversion, Gauss-Laguerre approximation for the probability densities or Monte Carlo simulations (in both the symmetric and asymmetric cases). Last, section~\ref{sec:conclusion} is dedicated to concluding remarks and discussion of future work. For the reader's convenience, the paper is also equipped with two appendices: appendix \ref{app:Mellin} offers a review of the theory of the Mellin transform in $\mathbb{C}$ and $\mathbb{C}^n$, and appendix \ref{app:LG} provides more details on the Gauss-Laguerre quadrature and how it can be used to approximate option prices.

\section{Model definition}\label{sec:model}

We start by recalling some basic facts about the Variance-Gamma process, and how it is used for option pricing within the framework of exponential L\'evy models.

\subsection{The Variance Gamma process}\label{subsec:variance_gamma}

The Variance Gamma (VG) process is a pure jump L\'evy process with finite variation and infinite activity; it can be defined either as a time-changed Brownian motion, a difference of two Gamma processes, or a particular case of a tempered stable L\'evy process. In this subsection we briefly recall, without proofs, the main features of the VG process as seen from these different point of views; detailed proofs and calculations can be found e.g. in the original paper by Madan, Carr and Chang \cite{Madan98}.

\paragraph{Gamma subordination} 
In this approach, the VG process is obtained by evaluating a drifted Brownian motion at a random time following a Gamma process (see \cite{Bertoin99,Geman01} for general results on Gamma subordination and time changes in L\'evy processes). More precisely, let $(\Omega, \mathcal{F}, \{ \mathcal{F}_t \}_{t \geq 0}, \mathbb{P} )$ be a filtered probability space, let $\gamma(t,\mu,\nu)$ be a Gamma process (i.e. a process whose increments $\gamma(t+h,\mu,\nu) - \gamma(t,\mu,\nu)$ follow a Gamma distribution with mean $\mu h $ and variance $\nu h$), and let $B(t,\theta,\sigma)$ be a drifted Brownian motion:
\begin{equation}
    B(t,\theta,\sigma) \, : = \, \theta t \, + \, \sigma W(t) \,\, , \,\,\,\,\  W(t) \sim \mathcal{N}(0,t)
    .
\end{equation}
Then, the VG process is the stochastic process defined by:
\begin{equation}\label{Variance-Gamma}
    X(t,\sigma,\nu,\theta) \, := \, B( \gamma(t,1,\nu) , \theta ,\sigma)
\end{equation}
where $\sigma$ is the \textit{scale parameter}, $\nu$ the \textit{kurtosis parameter} and $\theta$ is the \textit{asymmetry parameter}; the density function of the process is obtained by integrating the normal density conditionally to the Gamma time-change:
\begin{equation}\label{VG_density_1}
    f_{\sigma,\nu,\theta}(x,t) \, = \, \int\limits_0^\infty \, \frac{1}{\sigma\sqrt{2\pi y}} e^{-\frac{(x-\theta y)^2}{2\sigma^2 y}} \, \frac{y^{\frac{t}{\nu}-1}}{\nu^{\frac{t}{\nu}}\Gamma(\frac{t}{\nu})} \, e^{-\frac{y}{\nu}} \, \ud y
    .
\end{equation}
The density \eqref{VG_density_1} can be expressed in terms of special functions as:
\begin{equation}\label{VG_density_2}
    f_{\sigma,\nu,\theta}(x,t) \, = \, \frac{2 e^{\theta\frac{x}{\sigma^2}}}{\nu^{\frac{t}{\nu}} \sqrt{2\pi} \sigma \Gamma(\frac{t}{\nu})} \, \left( \frac{x^2}{\frac{2\sigma^2}{\nu} + \theta^2}  \right)^{\frac{t}{2\nu}-\frac{1}{4}} \, K_{\frac{t}{\nu}-\frac{1}{2}} \left( \frac{1}{\sigma^2} \sqrt{\frac{2\sigma^2}{\nu} + \theta^2} |x| \right)
\end{equation}
where $K_\alpha(X)$ denotes the modified Bessel function of the second kind (see definition and properties in \cite{Abramowitz72}). The characteristic function admits a simple representation in the Fourier space:
\begin{equation}\label{characteristic_function}
    \Phi_{\sigma,\nu,\theta}(u,t) \,  := \, \mathbb{E}^{\mathbb{P}}  \, \left[ e^{i u X(t,\sigma,\nu,\theta) } \right] \, = \,  \left(  \frac{1}{1- i \theta\nu u + \frac{\sigma^2\nu}{2}u^2} \right)^{\frac{t}{\nu}}
\end{equation}
and, if we define the L\'evy symbol (or characteristic exponent) by:
\begin{equation}\label{characteristic_exponent}
     \Psi_{\sigma,\nu,\theta}(u) \,  := \, -\log \mathbb{E}^{\mathbb{P}} \, \left[ e^{i u X(1,\sigma,\nu,\theta) } \right] \,  = \, \frac{1}{\nu} \, \log \left(  1- i \theta\nu u + \frac{\sigma^2\nu}{2}u^2 \right)
     ,
\end{equation}
then the following holds true:
\begin{equation}
    \Phi_{\sigma,\nu,\theta} (u,t) \, = \,  e^{-t\Psi_{\sigma,\nu,\theta}(u)} 
    .
\end{equation}

\paragraph{Decomposition} 
As a consequence of the finite variations property, the VG process can be written down as a difference of two Gamma processes:
\begin{equation}
    X(t,\sigma,\nu,\theta) \, = \, \gamma^+ (t,\mu_+,\nu_+) \, - \, \gamma^- (t,\mu_-,\nu_-)
\end{equation}
where the parameters are defined by
\begin{equation}\label{gamma_parameters}
    \left\{
    \begin{aligned}
        & \mu_\pm \, := \, \frac{1}{2} \sqrt{\theta^2 + \frac{2\sigma^2}{\nu}} \, \pm \, \frac{\theta}{
        2} \\
        & \nu_\pm \, := \, \mu_\pm^2 \nu
        .
    \end{aligned}
    \right.
\end{equation}

\paragraph{L\'evy measure}
The VG process is a L\'evy process without Brownian component nor additional drift, i.e., the L\'evy-Khintchine representation for the characteristic exponent \eqref{characteristic_exponent} is:
\begin{equation}
    \Psi_{\sigma,\nu,\theta}(k) \, = \, \int_{\mathbb{R}\backslash \{0\} } \, ( 1- e^{i k x}  ) \, \Pi_{\sigma,\nu,\theta} (\ud x)
\end{equation}
where the L\'evy measure can be expressed as:
\begin{equation}\label{Levy_measure_1}
    \Pi_{\sigma,\nu,\theta}( \ud x) \, = \, \frac{1}{\nu} \left( \frac{e^{-\frac{1}{\mu_-\nu}|x|}}{|x|} \mathbbm{1}_{ \{ x<0 \} } \, + \, \frac{e^{-\frac{1}{\mu_+\nu}x}}{x} \mathbbm{1}_{ \{ x>0 \} }  \right)  \ud x
    .
\end{equation}
The representation \eqref{Levy_measure_1} shows that the VG process is actually a particular case of a CGMY process of \cite{Carr02}, with \textit{overall activity level} $C=\frac{1}{\nu}$, \textit{skewness parameters} $G=\frac{1}{\mu_-\nu}$ and $M=\frac{1}{\mu_+\nu}$, and \textit{fine structure parameter} $Y=1$. Note that, using the transformation \eqref{gamma_parameters}, the L\'evy measure \eqref{Levy_measure_1} can be re-written as:
\begin{equation}\label{Levy_measure_2}
    \Pi_{\sigma,\nu,\theta}( \ud x) \, = \, \frac{e^{\frac{\theta x}{\sigma^2}}}{\nu |x|} e^{ -\frac{\sqrt{  \frac{\theta^2}{\sigma^2} + \frac{2}{\nu} }}{\sigma} |x| } \, \ud x
\end{equation}
and is symmetric around the origin when $\theta = 0$.

\subsection{Exponential VG model and option pricing}\label{subsec:exponential_process}

Let us introduce the exponential VG model, following the classical setup of exponential L\'evy models such as defined e.g. in \cite{Schoutens03,Tankov10}.
\paragraph{Model specification}
Let $T>0$, and let $S_t$ denote the value of a financial asset at time $t\in[0,T]$; we assume that it can be modeled as the realization of a stochastic process $\{ S_t \}_{t \in [0,T]}$ on the canonical space $\Omega = \mathbb{R}_+$ equipped with its natural filtration, and that, under the \textit{risk-neutral measure} $\mathbb{Q}$, its instantaneous variations can be written down in local form as:
\begin{equation}\label{SDE_exp}
    \frac{\ud S_t}{S_t} \, = \, (r - q) \, \ud t \, + \,  \, \ud X (t,\sigma,\nu,\theta)
    .
\end{equation}
In the stochastic differential equation \eqref{SDE_exp}, $r\in\mathbb{R}$ is the risk-free interest rate and $q\in\mathbb{R}$ is the dividend yield (both assumed to be deterministic and continuously compounded), and $\{ X (t,\sigma,\nu,\theta) \}_{t \in [0,T]}$ is the VG process; for the simplicity of notations, we will assume that $q = 0$, but all the results of the paper remain valid when replacing $r$ by $r-q$.

The solution to \eqref{SDE_exp} is the \textit{exponential process} defined by:
\begin{equation}\label{Solution_exp}
    S_T \, = \, S_t e^{(r+\omega_{\sigma,\nu,\theta})\tau + X (\tau,\sigma,\nu,\theta) }
\end{equation}
where $\tau:=T-t$ is the horizon (or time-to-maturity), and $\omega_{\sigma,\nu,\theta}$ is a \textit{martingale adjustment} (also called convexity adjustment, or compensator) computed in a way that the discounted stock price is a $\mathbb{Q}$-martingale, i.e. that the relation $\mathbb{E}_t^{\mathbb{Q}} [ S_T ] = e^{r\tau} S_t$ is satisfied. From \eqref{Solution_exp}, this resumes to the condition:
\begin{equation}
    \mathbb{E}_t^{\mathbb{Q}} \left[ e^{\omega_{\sigma,\nu,\theta}\tau + X (\tau,\sigma,\nu,\theta)  } \right] \, = \, 1
    ,
\end{equation}
or, equivalently, in terms of the L\'evy symbol \eqref{characteristic_exponent}:
\begin{equation}\label{omega_def}
    \omega_{\sigma,\nu,\theta} \, = \, \Psi_{\sigma,\nu,\theta}(-i) \, = \, \frac{1}{\nu}\log \left( 1 - \theta\nu - \frac{\sigma^2\nu}{2}  \right)
    .
\end{equation}
It is interesting to note that, if $\theta=0$, then \eqref{omega_def} coincides with the Gaussian martingale adjustment in the low variance regime:
\begin{equation}
    \omega_{\sigma,\nu,0} \, \underset{\nu\rightarrow 0}{\longrightarrow} \, -\frac{\sigma^2}{2}
\end{equation}
and, in this limit, the exponential VG model degenerates into the usual Black-Sholes model.

\paragraph{Option pricing}
Let $N\in\mathbb{N}$ and $\mathcal{P}: \mathbb{R}_+^{1+N} \rightarrow \mathbb{R}$ be a time independent payoff function, i.e. a function depending only on the terminal price $S_T$ and on some positive parameters $K_n$, $n= 1\dots N$:
\begin{equation}
    \mathcal{P} \, : (S_T,K_1, \dots , K_N) \, \rightarrow \, \mathcal{P} (S_T,K_1, \dots , K_N) \, := \, \mathcal{P} (S_T, \underline{K} )
    .
\end{equation}
The value at time $t$ of a contingent claim $\mathcal{C}_{\sigma,\nu,\theta}$ with payoff $\mathcal{P}(S_T,\underline{K})$ is equal to the risk-neutral conditional expectation of the discounted payoff:
\begin{equation}\label{Risk-neutral_1}
   \mathcal{C}(S_t,\underline{K},r,t,T,\sigma,\nu,\theta) \, = \, \mathbb{E}_t^{\mathbb{Q}} \left[e^{-r(T-t)} \mathcal{P} (S_T,\underline{K}) \right]
   .
\end{equation}
As the VG process admits the density $f_{\sigma,\nu,\theta}(x,t)$, then, using \eqref{Solution_exp}, we can re-write \eqref{Risk-neutral_1} by integrating all possible realizations for the terminal payoff over the VG density :
\begin{equation}\label{Risk-neutral_2}
    \mathcal{C}(S_t,\underline{K},r,\tau,\sigma,\nu,\theta) \, = \, e^{-r\tau} \, \int\limits_{-\infty}^{+\infty} \, \mathcal{P} \left( S_t e^{(r+\omega_{\sigma,\nu,\theta})\tau +x} , \underline{K}  \right) f_{\sigma,\nu,\theta}(x,\tau) \, \ud x
    .
\end{equation}
In all the following and to simplify the notations, we will forget the $t$ dependence in the stock price $S_t$. 

\section{Symmetric Variance Gamma process}\label{sec:symmetricVG}

One speaks of a \textit{symmetric Variance Gamma process} when $\theta=0$, that is, when the process corresponds to a time-changed Brownian motion without drift; it was first considered in \cite{Madan90}. In this section we start by showing that, under symmetric VG process, the price of a contingent claim admits a factorized representation in the Mellin space; then, we apply this pricing formula to the valuation of various payoffs.

\subsection{Pricing formula}\label{subsec:symmetric_pricing_formula}

Let $\sigma,\nu > 0$ and let us denote the density of the symmetric VG process by $f_{\sigma,\nu}(x,t):=f_{\sigma,\nu,0}(x,t)$. 
\begin{lemma}\label{lemma:density_symmetricVG}
    Let $F_{\nu}^*$ be the meromorphic function on $\mathbb{C}$ defined by
    \begin{equation}\label{F_nu}
        F_{\nu}^*(s_1) \, = \,  \Gamma \left( \frac{s_1-\frac{\tau}{\nu} + \frac{1}{2}}{2} \right) \Gamma \left( \frac{s_1 + \frac{\tau}{\nu} - \frac{1}{2}}{2} \right) 
        .
    \end{equation}
    Then, for any $ c_1 > | \frac{\tau}{\nu} - \frac{1}{2} | $, the following Mellin-Barnes representation holds true:
    \begin{equation}\label{Density_symmetricVG_MB}
        f_{\sigma,\nu}(x,\tau) \, = \, \frac{1}{ 2 \sqrt{\pi} \Gamma(\frac{\tau}{\nu}) } \, \int\limits_{c_1 - i\infty}^{c_1 + i\infty} \, F_{\nu}^*(s_1) \, (\sigma\sqrt{2\nu})^{s_1-\frac{\tau}{\nu} - \frac{1}{2}} |x|^{-s_1+\frac{\tau}{\nu} - \frac{1}{2}} \, \frac{\ud s_1}{2i\pi}
        .
    \end{equation}
\end{lemma}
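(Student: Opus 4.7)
The strategy is to start from the closed-form expression of the density given in \eqref{VG_density_2} specialized to $\theta=0$, and to express the modified Bessel function $K_{\tau/\nu - 1/2}$ through its classical Mellin--Barnes representation. After cancellation of prefactors, this should yield exactly \eqref{Density_symmetricVG_MB}.

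First, I would set $\theta=0$ in \eqref{VG_density_2} to obtain
\begin{equation*}
f_{\sigma,\nu}(x,\tau) \, = \, \frac{2}{\nu^{\tau/\nu} \sqrt{2\pi}\, \sigma\, \Gamma(\tau/\nu)} \left( \frac{\nu x^2}{2\sigma^2} \right)^{\tau/(2\nu) - 1/4} K_{\tau/\nu - 1/2}\!\left( \frac{|x|}{\sigma}\sqrt{\frac{2}{\nu}} \right),
\end{equation*}
and introduce the dimensionless variable $u := |x|/(\sigma\sqrt{2\nu})$, so that the argument of the Bessel function becomes $2u$ while the polynomial prefactor collapses to a power of $\nu u$. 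Second, I would invoke the standard Mellin--Barnes identity
\begin{equation*}
K_\alpha(z) \, = \, \frac{1}{2 i\pi} \int\limits_{c-i\infty}^{c+i\infty} 2^{s-2}\, \Gamma\!\left(\frac{s-\alpha}{2}\right) \Gamma\!\left(\frac{s+\alpha}{2}\right) z^{-s}\, \ud s,
\end{equation*}
valid for $c > |\mathrm{Re}(\alpha)|$ (the inverse Mellin transform of the tabulated Mellin transform of $K_\alpha$, recalled in Appendix~\ref{app:Mellin}), applied with $\alpha = \tau/\nu - 1/2$ and $z = 2u$.

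Third, I would carry out the algebraic simplification: the various powers of $2$, $\sigma$ and $\nu$ coming from the Bessel prefactor, from the Mellin--Barnes integrand, and from the change of variable should telescope into a single factor $(\sigma\sqrt{2\nu})^{s_1 - \tau/\nu - 1/2}$ multiplied by $|x|^{-s_1 + \tau/\nu - 1/2}$, while the overall normalizing constant reduces to $1/(2\sqrt{\pi}\,\Gamma(\tau/\nu))$, reproducing \eqref{Density_symmetricVG_MB}. Finally, I would verify the contour constraint: the rightmost pole of $\Gamma((s_1 - \tau/\nu + 1/2)/2)$ lies at $s_1 = \tau/\nu - 1/2$, and that of $\Gamma((s_1 + \tau/\nu - 1/2)/2)$ at $s_1 = 1/2 - \tau/\nu$; requiring the vertical contour $\mathrm{Re}(s_1) = c_1$ to sit to the right of both sequences of poles yields exactly $c_1 > |\tau/\nu - 1/2|$.

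The main obstacle is not analytic but purely bookkeeping: carefully tracking the exponents of $\sigma$, $\nu$, $2$, and $|x|$ while combining the external prefactor $(\nu x^2/(2\sigma^2))^{\tau/(2\nu) - 1/4}$ with the factor $z^{-s_1}$ of the Mellin integrand, so that the final answer consolidates into the clean form stated in the lemma. No deeper subtlety arises, because the exponential decay of $|\Gamma(\,\cdot\,)|$ along vertical lines ensures absolute convergence of the Mellin--Barnes integral and trivially justifies any interchange of integrations with the substitution of the representation of $K_\alpha$.
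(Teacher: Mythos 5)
Your proposal is correct and follows essentially the same route as the paper: specialize \eqref{VG_density_2} to $\theta=0$, insert the Mellin--Barnes representation of $K_{\tau/\nu-1/2}$ obtained by inverting its tabulated Mellin transform, and track the powers of $\sigma$, $\nu$ and $2$ to arrive at \eqref{Density_symmetricVG_MB}; your identification of the contour constraint $c_1>|\tau/\nu-1/2|$ from the rightmost poles of the two Gamma factors matches the paper's convergence condition. The only difference is cosmetic: you spell out the bookkeeping via the dimensionless variable $u=|x|/(\sigma\sqrt{2\nu})$, which the paper leaves implicit.
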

\begin{proof}
    Taking $\theta = 0$ in \eqref{VG_density_2} yields:
    \begin{equation}
        f_{\sigma,\nu}(x,\tau) \, = \, \frac{2}{ \sqrt{2\pi} \Gamma(\frac{\tau}{\nu}) (\sigma^2\nu)^{\frac{\tau}{\nu}} } \, 
        \left( \frac{ |x| }{ \frac{1}{\sigma} \sqrt{\frac{2}{\nu}} }  \right)^{\frac{\tau}{\nu}-\frac{1}{2}} \, 
        K_{\frac{\tau}{\nu}-\frac{1}{2}} \left( \frac{1}{\sigma} \sqrt{\frac{2}{\nu}} |x| \right)
    \end{equation}
    which, as expected, is symmetric around 0. The Mellin-Barnes representation \eqref{Density_symmetricVG_MB} is easily obtained from the Mellin transform (see table \ref{tab:Mellin} in appendix \ref{app:Mellin}, or \cite{Bateman54} p. 331):
    \begin{equation}
        K_{\alpha} (X) \, \longrightarrow \, 2^{s_1-2}  \, \Gamma \left( \frac{s_1-\alpha}{2} \right) \Gamma \left( \frac{s_1+\alpha}{2} \right) \, \, , \,\,\,\,\, X>0
\end{equation}
    which converges for $Re(s_1) > |Re (\alpha)|$, and by applying the Mellin inversion formula (see appendix \ref{app:Mellin}, or \cite{Flajolet95}).
\end{proof}

Let us now introduce the double-sided Mellin transform of the payoff function:
\begin{equation}\label{Payoff_transform}
    P_{\sigma,\nu}^*(s_1) \, = \, \int\limits_{-\infty}^{\infty} \, \mathcal{P} \left( S e^{(r+\omega_{\sigma,\nu})\tau +x} , \underline{K}  \right) \, |x|^{-s_1 + \frac{\tau}{\nu} - \frac{1}{2}} \, \ud x
\end{equation}
where the symmetric martingale adjustment resumes to:
\begin{equation}\label{symmetric_omega}
    \omega_{\sigma,\nu} \, := \, \omega_{\sigma,\nu,0} \, = \, \frac{1}{\nu}\log \left( 1 - \frac{\sigma^2\nu}{2}  \right)
    ,
\end{equation}
and let us assume that the integral \eqref{Payoff_transform} converges on some interval $s_1 \in (c_-,c_+)$, $c_-<c_+$. Then, as a direct consequence of the pricing formula \eqref{Risk-neutral_2} and of lemma~\ref{lemma:density_symmetricVG}, we have:

\begin{proposition}[Factorization in the Mellin space]
    \label{prop:symmetric_factorization}
    Let $c_1\in (\tilde{c}_-,\tilde{c}_+)$ where $
    (\tilde{c}_-,\tilde{c}_+) := (c_-,c_+) \cap (| \frac{\tau}{\nu} - \frac{1}{2} | , \infty)$ is assumed to be nonempty. Then the value at time $t$ of a contingent claim $\mathcal{C}$ with maturity $T$ and payoff $\mathcal{P}(S_T,\underline{K})$ is equal to:
    \begin{equation}\label{Risk-neutral:convolution}
        \mathcal{C} (S,\underline{K},r,\tau,\sigma,\nu) \, = \, \frac{e^{-r\tau}}{ 2 \sqrt{\pi} \Gamma(\frac{\tau}{\nu}) } \int\limits_{c_1-i\infty}^{c_1+i\infty} \, P_{\sigma,\nu}^*(s_1) F_{\nu}^*(s_1) \, \left( \sigma\sqrt{2\nu} \right)^{ s_1-\frac{\tau}{\nu} - \frac{1}{2} } \, \frac{\ud s_1}{2i\pi}
        .
    \end{equation}
\end{proposition}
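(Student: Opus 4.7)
The plan is to derive the factorization formula by substituting the Mellin-Barnes representation of the density into the risk-neutral pricing integral and then interchanging the order of integration. First, I would start from equation~(\ref{Risk-neutral_2}) specialized to the symmetric case $\theta=0$:
\begin{equation*}
    \mathcal{C}(S,\underline{K},r,\tau,\sigma,\nu) \, = \, e^{-r\tau} \int\limits_{-\infty}^{+\infty} \mathcal{P}\left( S e^{(r+\omega_{\sigma,\nu})\tau + x}, \underline{K} \right) f_{\sigma,\nu}(x,\tau) \, \ud x .
\end{equation*}
Next I would insert the Mellin-Barnes representation~(\ref{Density_symmetricVG_MB}) from Lemma~\ref{lemma:density_symmetricVG} for $f_{\sigma,\nu}(x,\tau)$, pulling the prefactor $\frac{1}{2\sqrt{\pi}\Gamma(\tau/\nu)}$ and the $s_1$-independent factor $(\sigma\sqrt{2\nu})^{s_1 - \tau/\nu - 1/2}$ through the $x$-integral.

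The second step would be the key analytic move: swap the order of the $x$-integral and the vertical $s_1$-integral by an application of Fubini's theorem. This produces, on the inside, exactly the double-sided Mellin integral~(\ref{Payoff_transform}) defining $P_{\sigma,\nu}^*(s_1)$, yielding the factored product $P_{\sigma,\nu}^*(s_1)\, F_{\nu}^*(s_1)$ under the contour integral. Once the exchange is justified, the statement~(\ref{Risk-neutral:convolution}) falls out immediately.

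The main obstacle will be justifying Fubini. This is exactly why the proposition asks that $c_1$ lie in the intersection $(\tilde c_-, \tilde c_+)$: on the one hand, the constraint $c_1 > |\tau/\nu - 1/2|$ is the fundamental strip where the Mellin-Barnes representation of $K_{\tau/\nu-1/2}$ (and hence of the density) converges absolutely, which follows from the Stirling-type decay of the product $F_\nu^*(s_1)$ along vertical lines; on the other hand, the condition $c_1 \in (c_-, c_+)$ guarantees absolute convergence of the $x$-integral defining $P_{\sigma,\nu}^*(s_1)$. Combining the two in the nonempty intersection ensures that the double integral
\begin{equation*}
    \int\limits_{c_1-i\infty}^{c_1+i\infty} \int\limits_{-\infty}^{+\infty} \left| \mathcal{P}\left(S e^{(r+\omega_{\sigma,\nu})\tau+x},\underline{K}\right) \right| \, |x|^{-s_1 + \tau/\nu - 1/2} \, |F_\nu^*(s_1)| \, \ud x \, |\ud s_1|
\end{equation*}
is finite, legitimizing the interchange.

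Finally, I would collect the prefactors $e^{-r\tau}/(2\sqrt{\pi}\Gamma(\tau/\nu))$ and the power $(\sigma\sqrt{2\nu})^{s_1 - \tau/\nu - 1/2}$ outside the $x$-integration to recover precisely the right-hand side of~(\ref{Risk-neutral:convolution}). No further calculation is needed since Lemma~\ref{lemma:density_symmetricVG} already supplies the density representation and the definition~(\ref{Payoff_transform}) was introduced specifically so that the inner integral collapses to $P_{\sigma,\nu}^*(s_1)$.
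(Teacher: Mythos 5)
Your proposal is correct and follows exactly the route the paper intends: the paper simply asserts the proposition as "a direct consequence" of the pricing integral \eqref{Risk-neutral_2} and Lemma \ref{lemma:density_symmetricVG}, i.e., precisely the substitute-and-swap argument you spell out, with the condition $c_1\in(\tilde c_-,\tilde c_+)$ playing the Fubini-justifying role you identify. (Minor notational point: in your absolute-convergence display the exponent of $|x|$ should involve $\mathrm{Re}(s_1)=c_1$ rather than $s_1$ itself.)
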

The factorized form \eqref{Risk-neutral:convolution} turns out to be a very practical tool for option pricing. Indeed, as an integral along a vertical line in the complex plane, it can be conveniently expressed as a sum of residues associated to the singularities of the integrand, i.e., schematically: 
\begin{equation}
    \frac{e^{-r\tau}}{ 2 \sqrt{\pi} \Gamma(\frac{\tau}{\nu}) } \times \sum \, \left[ \textrm{residues of }P_{\sigma,\nu}^*(s_1) F_{\nu}^*(s_1) \times \textrm{powers of } \sigma\sqrt{2\nu} \right]
    .
\end{equation}
However, depending on the payoff's complexity, it can be necessary to introduce a second Mellin variable in order to evaluate $P_{\sigma,\nu}^*(s_1)$. In this case, it is possible to express the arising multiple complex integral as a sum of multidimensional residues, in virtue of a multidimensional generalization of the Jordan lemma which goes as follow.  Let the $\underline{a}_k$, $\underline{\tilde{a}}_j$ be vectors in $\mathbb{C}^n$, let the $b_k$, $\tilde{b}_j$ be complex numbers and let "." represent the euclidean scalar product. Assume that we are interested in evaluating an integral of the type
\begin{equation}\label{MB_Cn}
    \int\limits_{\underline{c}+i\mathbb{R}^n} \, \omega
\end{equation}
where the vector $\underline{c}$ belongs to the region of convergence $\mathrm{P}\subset\mathbb{C}^n$ of the integral \eqref{MB_Cn}, and where $\omega$ is a complex differential n-form reading
\begin{equation}
    \omega \, = \, \frac{\Gamma(\underline{a}_1.\underline{s} + b_1) \dots \Gamma(\underline{a}_m.\underline{s} + b_m)}{\Gamma(\underline{\tilde{a}}_1.\underline{s} + \tilde{b}_1) \dots \Gamma(\underline{\tilde{a}}_l.\underline{s} + \tilde{b}_l)} \, x_1^{-s_1} \, \dots \, x_n^{-s_n} \, \frac{\ud s_1 \dots \ud s_n}{(2i\pi)^n}  \hspace*{1cm} \, x_1, \dots x_n \in\mathbb{R}
    .
\end{equation}
The singular sets $D_k$ induced by the singularities of the Gamma functions
\begin{equation}
    D_k \, := \, \{ \underline{s}\in\mathbb{C}^n \, , \, \underline{a}_k.\underline{} + b_k = -n_k \, , \, n_k \in\mathbb{N}   \} \hspace*{1cm}  k=1 \dots m
\end{equation}
are called the \textit{divisors} of $\omega$. The \textit{characteristic vector} of $\omega$ is defined to be
\begin{equation}\label{Delta_n}
    \Delta \, = \, \sum\limits_{k=1}^m \underline{a}_k \, - \, \sum\limits_{j=1}^l \underline{\tilde{a}}_j
\end{equation}
and, for $\Delta\neq 0$, its \textit{admissible region} is:
\begin{equation}
    \Pi_\Delta \, := \, \{ \underline{s}\in\mathbb{C}^n \, , \, Re( \Delta . \underline{s} ) \, < \, Re( \Delta . \underline{c})  \}
    .
\end{equation}
As a consequence of the Stirling approximation at infinity for the Gamma function (see e.g. \cite{Abramowitz72}), $\omega$ decays exponentially fast in $\Pi_{\Delta}$ and 
\begin{equation}\label{Tsikh_residue}
    \int\limits_{\underline{c}+i\mathbb{R}^n} \, \omega \, = \, \sum\limits_{\Pi} \mathrm{Res} \, \omega
\end{equation}
for any \textit{cone} $\Pi\subset\Pi_{\Delta}$ whose faces are interested by exactly one of the divisors of $\omega$; such a cone is said to be \textit{compatible} with $\omega$, and generalizes the notion of left or right half plane in $\mathbb{C}$. If $\Delta=0$, then \eqref{Tsikh_residue} holds in every compatible cone. Readers interested in this very rich theory will find full details in \cite{Passare97} (see also \cite{Zhdanov98} for more details in the case $n=2$); see also appendix \ref{app:Mellin}.

\subsection{European and digital prices}\label{subsec:symmetric_payoffs}

In all the following, the \textit{forward strike price} $F$ and the \textit{log-forward moneyness} $k$ are defined to be:
\begin{equation}\label{moneyness}
   F \, := \, Ke^{-r\tau} \hspace*{1cm}
   k \, := \, \log\frac{S}{F} \, = \, \log\frac{S}{K} + r\tau
   .
\end{equation}
The \textit{re-scaled volatility} $\sigma_\nu$ and the \textit{(risk-neutral) moneyness} $k_\nu$ are:
\begin{equation}\label{asymmetricVG_notations}
    \sigma_\nu \, := \, \sigma\sqrt{\frac{\nu}{2}} \hspace*{1cm}
    k_{\sigma,\nu} \, := \, k + \omega_{\sigma,\nu}\tau
    .
\end{equation}
Instead otherwise stated, $k_{\sigma,\nu}$ will be our standard measure of moneyness, that is:
\begin{equation}
\left\{
    \begin{aligned}
    k_{\sigma,\nu} \, < \, 0 \, : & \hspace*{0.5cm} \textrm{Out-of-the-money (OTM) call} \\
    k_{\sigma,\nu} \, = \, 0 \, : & \hspace*{0.5cm} \textrm{At-the-money (ATM) call} \\
    k_{\sigma,\nu} \, > \, 0 \, : & \hspace*{0.5cm} \textrm{In-the-money (ITM) call}
    .
    \end{aligned}
\right.
\end{equation}
By an abuse of notations, we will denote the price of contingent claims either by $\mathcal{C}(S,K,r,\tau,\sigma,\nu)$ or $\mathcal{C}(k_{\sigma,\nu},\sigma_\nu)$. We will also denote  $\alpha:=\frac{\tau}{\nu}-\frac{1}{2}$ and assume $\alpha\notin \mathbb{Z}$ to avoid the degenerate case for for the Bessel function. Last, we will use the standard notation $[X]^+:= X \mathbbm{1}_{ \{X>0 \}}$.

\paragraph{Digital option (asset-or-nothing)} The asset-or-nothing call option is a simple exotic option consisting in receiving a unit of the underlying asset $S_T$, on the condition that $S_T$ is greater than a predetermined strike price $K$; the payoff is therefore:
\begin{equation}
    \mathcal{P}_{a/n} (S_T,K) \, := \, S_T \, \mathbbm{1}_{ \{ S_T  > K \}  }
    .
\end{equation}
\begin{formula}[Asset-or-nothing call] 
    \label{formula:symmetric_A/N}
    The value at time $t$ of an asset-or-nothing call option is: \\
    (i) (OTM price) If $ k_{\sigma,\nu} < 0$, 
        \begin{multline}\label{symmetric_A/N_OTM}
            C_{a/n}^-(k_{\sigma,\nu},\sigma_\nu) \, = \, \frac{F}{2\Gamma(\frac{\tau}{\nu})} \sum\limits_{\substack{n_1 = 0 \\ n_2 = 0}}^{\infty} \frac{(-1)^{n_1}}{n_1!}
            \left[
            \frac{\Gamma(\frac{-n_1+n_2+1}{2} + \alpha)}{\Gamma(\frac{-n_1+n_2}{2}+1)} 
            \left( \frac{-k_{\sigma,\nu}}{\sigma_\nu} \right)^{n_1} \sigma_\nu^{n_2} \right.
            \\
            \left.
            +
            \, 2  \, \frac{\Gamma(-2n_1-n_2-1 -2\alpha)}{\Gamma(-n_1+\frac{1}{2}-\alpha)} 
            \left( \frac{-k_{\sigma,\nu}}{\sigma_\nu} \right)^{2n_1 +1 +2\alpha} (-k_{\sigma,\nu})^{n_2}
            \right]
            .
        \end{multline}
    (ii) (ITM price) If $ k_{\sigma,\nu} > 0$,
        \begin{equation}\label{symmetric_A/N_ITM}
            C_{a/n}^+(k_{\sigma,\nu},\sigma_\nu) \, = \, S - C_{a/n}^-(k_{\sigma,\nu},-\sigma_\nu)
            .
        \end{equation}
        (iii) (ATM price) If $ k_{\sigma,\nu} = 0 $,
        \begin{equation}\label{symmetric_A/N_0}
            C_{a/n}^-(k_{\sigma,\nu},\sigma_\nu) \, = \,  C_{a/n}^+(k_{\sigma,\nu},\sigma_\nu) \, = \, \frac{F}{2\Gamma(\frac{\tau}{\nu})} \sum\limits_{n=0}^\infty \frac{\Gamma(\frac{n+1}{2}+\alpha)}{\Gamma(\frac{n}{2}+1)} \, \sigma_\nu^n 
            .
        \end{equation}        
\end{formula}
\begin{proof}
    To prove (i), we first note that, using notations \eqref{moneyness}, we have:
    \begin{equation}
        \mathcal{P}_{a/n} (Se^{(r+\omega_{\sigma,\nu})\tau+x},K) \, = \, Ke^{k_{\sigma,\nu}+x} \mathbbm{1}_{ \{x > -k_{\sigma,\nu}\} }
        .
    \end{equation}
    Using a Mellin-Barnes representation for the exponential term (see table \ref{tab:Mellin} in appendix \ref{app:Mellin}, or \cite{Bateman54} p. 312):
    \begin{equation}\label{MB_payoff_a/n}
        e^{k_{\sigma,\nu}+x} \, = \, \int\limits_{c_2 - i\infty}^{c_2 + i\infty} (-1)^{-s_
        2} \Gamma(s_2) (k_{\sigma,\nu} + x)^{-s_2} \, \frac{\ud s_2}{2i\pi} \hspace{1cm} (c_2 > 0)
    \end{equation}
    then the $P^*_{\sigma,\nu}(s_1)$ function \eqref{Payoff_transform} becomes:
    \begin{align}
        P^*_{\sigma,\nu}(s_1) & = \, K \int\limits_{c_2 - i\infty}^{c_2 + i\infty} (-1)^{-s_2} \Gamma(s_2) \int\limits_{-k_{\sigma,\nu}}^\infty (k_{\sigma,\nu} + x)^{-s_2} x^{-s_1+\alpha} \, \ud x \, \frac{\ud s_2}{2i\pi} \\
        & = K \int\limits_{c_2 - i\infty}^{c_2 + i\infty} (-1)^{-s_2} \frac{\Gamma(s_2)\Gamma(1-s_2)\Gamma(s_1+s_2-1-\alpha)}{\Gamma(s_1-\alpha)} (-k_{\sigma,\nu})^{-s_1-s_2+1+\alpha}   
        \, \frac{\ud s_2}{2i\pi}
    \end{align}
    where the $x$-integral is a particular case of a B\^eta integral, and converges because $-k_{\sigma,\nu}>0$. Inserting in the pricing formula \eqref{Risk-neutral:convolution} and using the Legendre duplication formula, the asset-or-nothing call thus reads:
    \begin{multline}\label{symmetric_A/N_OTM_integral}
        C_{a/n}^-(k_{\sigma,\nu},-\sigma_\nu) \, = \, \frac{F}{2\Gamma(\frac{\tau}{\nu})} \, \int\limits_{c_1-i\infty}^{c_1+i\infty} \int\limits_{c_2-i\infty}^{c_2+i\infty} \, (-1)^{-s_2} \, \times
        \\
         \frac{\Gamma(\frac{s_1+\alpha}{2})\Gamma(s_2)\Gamma(1-s_2)\Gamma(s_1+s_2-1-\alpha)}{\Gamma(s_1-\alpha)} (-k_{\sigma,\nu})^{-s_1-s_2+1+\alpha}  \sigma_\nu^{s_1-1-\alpha} \frac{\ud s_1 \ud s_2}{(2i\pi)^2 }
    \end{multline}
    where $(c_1,c_2)$ belongs to the convergence polyhedron of the integral $\mathrm{P}:=\{ (s_1,s_2)\in\mathbb{C}^2, 0 < Re(s_2 < 1, Re(s_1)>-\alpha,  Re(s_1+s_2) > 1+\alpha \}$. The characteristic vector \eqref{Delta_n} associated to \eqref{symmetric_A/N_OTM_integral} is equal to:
    \begin{equation}
        \Delta \, = \, 
        \begin{bmatrix}
            \frac{1}{2} \\ 1
        \end{bmatrix}
    \end{equation}
    resulting in the admissible region $\Pi_\Delta := \{ (s_1,s_2)\in\mathbb{C}^2, Re(s_2) < \frac{1}{2}(c_1 - Re(s_1)) +c_2 \}$. The cone $\Pi := \{ (s_1,s_2)\in\mathbb{C}^2 , Re(s_2)< 0 , Re (s_1+s_2) < 1 + \alpha \}$ satisfies $\Pi\subset\Pi_\Delta$ and, by definition, each divisor of \eqref{symmetric_A/N_OTM_integral} intersects exactly one of its faces; as a consequence of the residue theorem \eqref{Tsikh_residue}, the integral \eqref{symmetric_A/N_OTM_integral} equals the sum of the residues located in $\Pi$; these residues are associated with two series of poles of the integrand:
    \begin{itemize}
        \item[-] Series 1: when $\Gamma(\frac{s_1+\alpha}{2})$ and $\Gamma(s_2)$ are singular;
        \item[-] Series 2: when $\Gamma(s_1+s_2-1-\alpha)$ and $\Gamma(s_2)$ are singular.
    \end{itemize}
    The Gamma function being singular when its argument is a negative integer, poles of series 1 are located at $(s_1,s_2)=(-\alpha-2n_1,-n_2)$, $n_1,n_2\in\mathbb{N}$; a sequential application of the Cauchy formula yields the associated residues:
    \begin{multline}
        \frac{F}{2\Gamma(\frac{\tau}{\nu})} (-1)^{n_2}  \frac{2 (-1)^{n_1+n_2}}{n_1!n_2!}\frac{\Gamma(1+n_2)\Gamma(-2n_1-n_2-1-2\alpha)}{\Gamma(-n_1+\frac{1}{2}-\alpha)}(-k_{\sigma,\nu})^{2n_1+n_2+1+2\alpha}\sigma_\nu^{-2n_1-1-2\alpha}
    \end{multline}
    which, after simplification, is the second term in the r.h.s. of \eqref{symmetric_A/N_OTM}. Residues associated to the pole of series 2 can also be computed via the Cauchy formula (after making the change of variables $\tilde{s_1}:=s_1+s_2-1-\alpha$, $\tilde{s_2}:=s_2$), and give birth to the first term in the r.h.s. of \eqref{symmetric_A/N_OTM}, which completes the proof of (i).
    
    \noindent To prove (ii), it suffices to remark that, as $\{S_t\}_{t\in[0,T]}$ is a $\mathbb{Q}$-martingale, one has:
    \begin{equation}
        \mathbb{E}_t^\mathbb{Q} [ S_T \mathbbm{1}_{ \{ S_T > K \} } ] \, = \, Se^{r\tau} \, - \, \mathbb{E}_t^\mathbb{Q} [ S_T \mathbbm{1}_{ \{ S_T < K \} } ]
        .
    \end{equation}
    The expectation in the right hand side can be computed with the same technique than for (i), resulting in the series \eqref{symmetric_A/N_ITM}.
    
    \noindent Last, the proof of (iii) is straightforward: letting $k_{\sigma,\nu}\rightarrow 0$ in \eqref{symmetric_A/N_OTM} or \eqref{symmetric_A/N_ITM}, we see that only the terms for $n_1=0$ survive; renaming $n_2:=n$ yields formula \eqref{symmetric_A/N_0} and completes the proof.
    \end{proof}

\paragraph{European option} The European call option pays $S_T-K$ at maturity on the condition that the spot price is greater than the strike:
\begin{equation}
    \mathcal{P}_{eur} (S_T,K) \, := [ S_T-K ] ^+ 
    .
\end{equation}
\begin{formula}[European call] 
    \label{formula:symmetric_European}
    The value at time $t$ of a European call option is: \\
    (i) (OTM price) If $ k_{\sigma,\nu} < 0$, 
        \begin{multline}\label{symmetric_European_OTM}
            C_{eur}^-(k_{\sigma,\nu},\sigma_\nu) \, = \, \frac{F}{2\Gamma(\frac{\tau}{\nu})} \sum\limits_{\substack{n_1 = 0 \\ n_2 = 1}}^{\infty} \frac{(-1)^{n_1}}{n_1!}
            \left[
            \frac{\Gamma(\frac{-n_1+n_2+1}{2} + \alpha)}{\Gamma(\frac{-n_1+n_2}{2}+1)} 
            \left( \frac{-k_{\sigma,\nu}}{\sigma_\nu} \right)^{n_1} \sigma_\nu^{n_2} \right.
            \\
            \left.
            +
            \, 2  \, \frac{\Gamma(-2n_1-n_2-1 -2\alpha)}{\Gamma(-n_1+\frac{1}{2}-\alpha)} 
            \left( \frac{-k_{\sigma,\nu}}{\sigma_\nu} \right)^{2n_1 +1 +2\alpha} (-k_{\sigma,\nu})^{n_2}
            \right]
            .
        \end{multline}
    (ii) (ITM price) If $ k_{\sigma,\nu} > 0$,
        \begin{equation}\label{symmetric_European_ITM}
            C_{eur}^+(k_{\sigma,\nu},\sigma_\nu) \, = \, S - Ke^{-r\tau} -  C_{eur}^-(k_{\sigma,\nu},-\sigma_\nu)
            .
        \end{equation}
    (iii) (ATM price) If $ k_{\sigma,\nu} = 0 $,
        \begin{equation}\label{symmetric_European_0}
            C_{eur}^-(k_{\sigma,\nu},\sigma_\nu) \, = \,  C_{eur}^+(k_{\sigma,\nu},\sigma_\nu) \, = \, \frac{F}{2\Gamma(\frac{\tau}{\nu})} \sum\limits_{n=1}^\infty \frac{\Gamma(\frac{n+1}{2}+\alpha)}{\Gamma(\frac{n}{2}+1)} \, \sigma_\nu^n 
            .
        \end{equation} 
\end{formula}
\begin{proof}
    To prove (i), we first note that, using notations \eqref{moneyness}, we have:
    \begin{equation}
        \mathcal{P}_{eur} (Se^{(r+\omega_{\sigma,\nu})\tau+x},K) \, = \, K (e^{k_{\sigma,\nu}+x} -1) \mathbbm{1}_{ \{x > -k_{\sigma,\nu}\} }
        .
    \end{equation}
    Then, we use the Mellin-Barnes representation (see table \ref{tab:Mellin} in appendix \ref{app:Mellin}, or \cite{Bateman54} p. 313):
    \begin{equation}\label{MB_payoff_european}
        e^{k_{\sigma,\nu}+x} - 1 \, = \, \int\limits_{c_2 - i\infty}^{c_2 + i\infty} (-1)^{-s_
        2} \Gamma(s_2) (k_{\sigma,\nu} + x)^{-s_2} \, \frac{\ud s_2}{2i\pi} \hspace{1cm} (-1 < c_2 < 0)
    \end{equation}
    and we proceed the same way than for proving (i) in Formula \ref{formula:symmetric_A/N}; note that the $n_2$-summation in \eqref{symmetric_European_OTM} now starts in $n_2=1$ instead of $n_2=0$, because the strip of convergence of \eqref{MB_payoff_european} is reduced to $<-1,0>$ instead of $<0,\infty>$ in \eqref{MB_payoff_a/n}.
    
    \noindent To prove (ii), we write
    \begin{equation}
        \mathbb{E}_t^\mathbb{Q} [ (S_T - K) \mathbbm{1}_{ \{ S_T > K \} } ] \, = \, Se^{r\tau} \, - \,  K \,  \, - \, \mathbb{E}_t^\mathbb{Q} [ (S_T -K) \mathbbm{1}_{ \{ S_T < K \} } ]   
    \end{equation}
    and compute the expectation in the r.h.s. following the same technique than for proving (i)
    
    \noindent To prove (iii), we proceed like in the proof of formula~\ref{formula:symmetric_A/N} by letting $k_{\sigma,\nu}\rightarrow 0$ in both OTM and ITM prices \eqref{symmetric_European_OTM} and \eqref{symmetric_European_ITM}.
\end{proof}
An interesting particular case of formula~\eqref{symmetric_European_OTM} occurs when $S$ is \textit{at-the-money forward}, i.e. when $S=F$ or, equivalently, $k=0$. In this case, $k_{\sigma,\nu} = \omega_{\sigma,\nu}\tau$ and, Taylor expanding the martingale adjustment \eqref{symmetric_omega} at first order for small $\sigma$:
\begin{equation}\label{symmetric_omega_approx}
    \omega_{\sigma,\nu} \, \sim \, -\frac{\sigma^2}{2}
\end{equation}
we obtain:
\begin{formula}[European call - At-the-money forward price] 
    \label{formula:symmetric_European_ATM}
    If $S=F$, then under the martingale adjustment approximation \eqref{symmetric_omega}, the value at time $t$ of a European call option is:
    \begin{multline}\label{symmetric_European_ATM}
        C_{eur}^{ATM}(S,r,\tau,\sigma,\nu) \, = \, \frac{S}{2\Gamma(\frac{\tau}{\nu})} \sum\limits_{\substack{n_1 = 0 \\ n_2 = 1}}^{\infty} \frac{(-1)^{n_1}}{n_1!}
            \left[
            \frac{\Gamma(\frac{-n_1+n_2+1}{2} + \alpha)}{\Gamma(\frac{-n_1+n_2}{2}+1)} 
            \left( \frac{\sigma\tau}{\sqrt{2\nu}} \right)^{n_1} \left(\sigma\sqrt{\frac{\nu}{2}}\right)^{n_2} \right.
            \\
            \left.
            +
            \, 2  \, \frac{\Gamma(-2n_1-n_2-1 -2\alpha)}{\Gamma(-n_1+\frac{1}{2}-\alpha)} 
            \left( \frac{\sigma\tau}{\sqrt{2\nu}} \right)^{2n_1 +1 +2\alpha} \left(\frac{\sigma^2\tau}{2}\right)^{n_2}
            \right]
            .
    \end{multline}
\end{formula}
A good estimate of \eqref{symmetric_European_ATM} is given by the leading term of the series, i.e.:
\begin{equation}\label{symmetric_European_ATM_leading}
    C_{eur}^{ATMF}(S,r,\tau,\sigma,\nu) \, \simeq \, \frac{S}{\sqrt{2 \pi}} \, \frac{\Gamma(\frac{1}{2} + \frac{\tau}{\nu})}{\Gamma(\frac{\tau}{\nu})} \, \sigma \sqrt{\nu}
    .
\end{equation}
Note that it follows from Stirling\textquoteright s approximation for the Gamma function that 
\begin{equation}
    \frac{\Gamma \left( \frac{1}{2} + \frac{\tau}{\nu} \right)}{\Gamma \left( \frac{\tau}{\nu} \right)} \, \underset{\nu\rightarrow 0}{\sim} \, \sqrt{\frac{\tau}{\nu}}
    ,
\end{equation}
and therefore, in the low variance limit, the VG price \eqref{symmetric_European_ATM_leading} recovers the approximation of \cite{Brenner94} for the at-the-money-forward Black-Scholes price:
\begin{equation}\label{call_eur_ATMF_BS}
    C_{eur}^{ATMF}(S,r,\tau,\sigma,\nu) \, \simeq \, \frac{S}{\sqrt{2\pi}} \, \sigma \, \sqrt{\tau}
    ,
\end{equation}
that market practitioners often write $C\simeq 0.4 S \sigma\sqrt{\tau}$ because $1/\sqrt{2\pi}\simeq 0.399$. Let us also note that \eqref{symmetric_European_ATM_leading} allows for an estimate of the implied ATM-forward volatility:
\begin{equation}\label{symmetric_European_ATM_IV_approx}
    \sigma_I \, \simeq \, \sqrt{\frac{2\pi}{\nu}} \frac{\Gamma(\frac{\tau}{\nu})}{\Gamma(\frac{1}{2} + \frac{\tau}{\nu})} \, \frac{C_0}{S}
\end{equation}
where $C_0$ is the observed European ATM-forward market price. In the low variance regime (small $\nu$), \eqref{symmetric_European_ATM_IV_approx} becomes:
\begin{equation}
    \sigma_I \, \underset{\nu\rightarrow 0}{\longrightarrow} \, \, \sqrt{\frac{2\pi}{\tau}} \, \frac{C_0}{S}
    .
\end{equation}

\paragraph{Digital option (cash-or-nothing)}
The payoff of the cash-or-nothing call option is
\begin{equation}
    \mathcal{P}_{c/n}(S_T,K) \, = \, \mathbbm{1}_{ \{ S_T>K \} }
\end{equation}
and therefore the option price itself is:
\begin{equation}
    C_{c/n}(k_{\sigma,\nu},\sigma_\nu) \, = \, \frac{1}{K} \, \left(  C_{a/n}(k_{\sigma,\nu},\sigma_\nu) - C_{eur}(k_{\sigma,\nu},\sigma_\nu)  \right)
    .
\end{equation}
Using formulas~\ref{formula:symmetric_A/N} and \ref{formula:symmetric_European}, it is immediate to see that:

\begin{formula}[Cash-or-nothing call] 
    \label{formula:symmetric_C/N}
    The value at time $t$ of a cash-or-nothing call option is: \\
    (i) (OTM price) If $ k_{\sigma,\nu} < 0$, 
    \begin{multline}\label{symmetric_C/N_OTM}
        C_{c/n}^-(k_{\sigma,\nu},\sigma_\nu) \, = \, \frac{e^{-r\tau}}{2\Gamma(\frac{\tau}{\nu})} \sum\limits_{n=0}^{\infty} \frac{(-1)^{n}}{n!}
        \left[
        \frac{\Gamma(\frac{-n+1}{2} + \alpha)}{\Gamma(\frac{-n}{2}+1)} 
        \left( \frac{-k_{\sigma,\nu}}{\sigma_\nu} \right)^{n} \right.
        \\
        \left.
        +
        \, 2  \, \frac{\Gamma(-2n-1 -2\alpha)}{\Gamma(-n+\frac{1}{2}-\alpha)} 
        \left( \frac{-k_{\sigma,\nu}}{\sigma_\nu} \right)^{2n +1 +2\alpha} 
        \right]
        .
    \end{multline}
    (ii) (ITM price) If $ k_{\sigma,\nu} > 0$,
        \begin{equation}\label{symmetric_C/N_ITM}
            C_{c/n}^+(k_{\sigma,\nu},\sigma_\nu) \, = \, e^{-r\tau} - C_{c/n}^-(k_{\sigma,\nu},-\sigma_\nu)
            .
        \end{equation}
    (iii) (ATM price) If $ k_{\sigma,\nu} = 0 $,
        \begin{equation}\label{symmetric_C/N_0}
            C_{eur}^-(k_{\sigma,\nu},\sigma_\nu) \, = \,  C_{eur}^+(k_{\sigma,\nu},\sigma_\nu) \, = 
            \, \frac{e^{-r\tau}}{2}
            .
        \end{equation} 
\end{formula}

\subsection{Other payoffs}\label{subsec:symmetric_other_payoffs}
In this subsection, we show how the pricing formulas established in subsection~\ref{subsec:symmetric_payoffs} can be extended to other exotic options.

\paragraph{Gap option}
A Gap (sometimes called Pay-Later) call has the following payoff:
\begin{equation}\label{Payoff_Gap}
    \mathcal{P}_{gap} (S_T,K_1,K_2) \, = \, (S_T-K_1) \mathbbm{1}_{ \{ S_T > K_2 \} }
\end{equation}
and degenerates into the European call when trigger and strike prices coincide ($K_1=K_2=K$). From the definition \eqref{Payoff_Gap}, it is immediate to see that the value at time $t$ of the Gap call is:
\begin{equation}
    C_{gap} (k_{\sigma,\nu},\sigma_\nu) = C_{a/n}(k_{\sigma,\nu},\sigma_\nu) \, - \, K_1 \, C_{c/n} (k_{\sigma,\nu},\sigma_\nu)
\end{equation}
where the value of the asset-or-nothing and cash-or-nothing calls are given by formulas~\ref{formula:symmetric_A/N} and \ref{formula:symmetric_C/N} for $K=K_2$.

\paragraph{Power option}
Power options deliver a higher payoff than vanilla options, and are used to increase the leverage ratio of trading strategies. For instance, an asset-or-nothing power call has the non-linear payoff:
\begin{equation}
    \mathcal{P}_{pow.a/n}(S_T,K) \, = \, S_T^q \mathbbm{1}_{ \{S_T^q > K\} }
\end{equation}
for some $q\geq1$. We can remark that
\begin{equation}\label{Payoff_power_a/n}
    \mathcal{P}_{pow.a/n}(Se^{(r+\omega_{\sigma,\nu})\tau+x},K) \, = \, Ke^{q ( \tilde{k}_{\sigma,\nu} +x) } \mathbbm{1}_{ \{ x > -\tilde{k}_{\sigma,\nu} \} }
\end{equation}
where
\begin{equation}
    \tilde{k}_{\sigma,\nu} \, := \, \log\frac{S}{K^{\frac{1}{q}}} + r\tau + \omega_{\sigma,\nu}
    .
\end{equation}
Introducing a Mellin-Barnes representation for the exponential term in \eqref{Payoff_power_a/n} like we did in eq. \eqref{MB_payoff_a/n} shows that the pricing formula for the OTM ($\tilde{k}_{\sigma,\nu} < 0$) power asset-or-nothing call is the same than \eqref{symmetric_A/N_OTM}, when replacing $k_{\sigma,\nu}$ by $\tilde{k}_{\sigma,\nu}$ and multiplying out the series terms by $q^{n_2}$, that is:
\begin{multline}\label{symmetric_power_A/N_OTM}
            C_{pow.a/n}^-(\tilde{k}_{\sigma,\nu},\sigma_\nu) \, = \, \frac{F}{2\Gamma(\frac{\tau}{\nu})} \sum\limits_{\substack{n_1 = 0 \\ n_2 = 0}}^{\infty} \frac{(-1)^{n_1}}{n_1!} q^{n_2}
            \left[
            \frac{\Gamma(\frac{-n_1+n_2+1}{2} + \alpha)}{\Gamma(\frac{-n_1+n_2}{2}+1)} 
            \left( \frac{-\tilde{k}_{\sigma,\nu}}{\sigma_\nu} \right)^{n_1} \sigma_\nu^{n_2} \right.
            \\
            \left.
            +
            \, 2  \, \frac{\Gamma(-2n_1-n_2-1 -2\alpha)}{\Gamma(-n_1+\frac{1}{2}-\alpha)} 
            \left( \frac{-\tilde{k}_{\sigma,\nu}}{\sigma_\nu} \right)^{2n_1 +1 +2\alpha} (-\tilde{k}_{\sigma,\nu})^{n_2}
            \right]
            .
        \end{multline}
Letting $\tilde{k}_{\sigma,\nu}\rightarrow 0$ in \eqref{symmetric_power_A/N_OTM} yields the ATM price:
\begin{equation}\label{symmetric_power_A/N_ATM}
    C_{pow.a/n}^{atm}(k_{\sigma,\nu},\sigma_\nu) \, = \, \frac{F}{2\Gamma(\frac{\tau}{\nu})} \sum\limits_{n=0}^\infty \frac{\Gamma(\frac{n+1}{2}+\alpha)}{\Gamma(\frac{n}{2}+1)} \, (q \sigma_\nu)^n 
    .
\end{equation}
The ITM price can be obtained from the parity relation:
\begin{equation}\label{symmetric_power_A/N_ITM} 
    C_{pow.a/n}^+(\tilde{k}_{\sigma,\nu},\sigma_\nu) \, = \, S^q e^{(q-1)(r-q \omega_{\sigma,\nu})\tau} \, - \, C_{pow.a/n}^-(\tilde{k}_{\sigma,\nu},-\sigma_\nu)
\end{equation}
which follows from the risk-neutral expectation of $S^q$. Extension of formulas \eqref{symmetric_power_A/N_OTM}, \eqref{symmetric_power_A/N_ATM} and \eqref{symmetric_power_A/N_ITM} to European and cash-or-nothing options are straightforward.

\paragraph{Log option}
Log options were introduced in \cite{Wilmott06} and are basically options on the rate of return of the underlying asset. The call's payoff is:
\begin{equation}
    \mathcal{P}_{log} (S_T,K) \, := 
    \left[ \log S_T \, - \, \log K  \right]^+
\end{equation}
for $K>0$. Let us remark that we have:
\begin{equation}
    \mathcal{P}_{log} (Se^{(r+\omega_{\sigma,\nu})\tau+x},K) \, = \,  [ k_{\sigma,\nu} +x  ]^+
\end{equation}
and therefore, the $P_{\sigma,\nu}^*$ function can be written down as:
\begin{equation}
    P_{\sigma,\nu}^* (s_1) \, = \, \frac{(-k_{\sigma,\nu})^{-s_1+2+\alpha}}{(s_1-2-\alpha)(s_1-1-\alpha)}
\end{equation}
on the condition that $k_{\sigma,\nu}<0$. Inserting in the pricing formula \eqref{Risk-neutral:convolution} and summing up all arising residues yields the OTM log call price:
\begin{multline}\label{symmetric_log_OTM} 
    C_{log}^- (k_{\sigma,\nu},\sigma_\nu) \, = \, \frac{e^{-r\tau}}{\sqrt{\pi}\Gamma(\frac{\tau}{\nu})} \left[ \Gamma(1+\alpha)\sigma_\nu \, + \, \frac{\sqrt{\pi}}{2}\Gamma(\frac{1}{2}+\alpha)k_{\sigma,\nu} \, + \,  
    \right.
    \\
    \left.
    \sum\limits_{n=0}^\infty \frac{(-1)^{n}}{n!} \left(
    \frac{\Gamma(\alpha-n)}{(2n+2)(2n+1)} (-k_{\sigma,\nu})^{2n+2}(2\sigma_\nu)^{-2n-1}
    + \frac{\Gamma(-\alpha-n)}{( 2n + 2\alpha +2)(2n+2\alpha+1)} 
    \right.\right.
    \\
    \left.\left.
    (-k_{\sigma,\nu})^{2n+2\alpha+2}(2\sigma_\nu)^{-2\alpha-2n-1}
    \right)\right]
    .
\end{multline}
Letting $k_{\sigma,\nu}\rightarrow 0$ in \eqref{symmetric_log_OTM} yields the ATM price:
\begin{equation}
    C_{log}^{atm} (k_{\sigma,\nu},\sigma_\nu) \, = \, \frac{e^{-r\tau}}{\sqrt{\pi}} \frac{\Gamma(\frac{1}{2} + \frac{\tau}{\nu})}{\Gamma(\frac{\tau}{\nu})} \sigma_\nu
    .
\end{equation}


\section{Extension to the asymmetric process}\label{sec:asymmetricVG}

In this section, we establish a factorized pricing formula for the asymmetric VG process ($\theta\neq 0$) and apply the result to the cash-or-nothing option. 

\subsection{Pricing formula}\label{subsec:asymmetric_pricing_formula}

Let $\sigma,\nu > 0$, and let $\theta\in\mathbb{R}$. Let us define the positive quantity
\begin{equation}
    q_{\sigma,\nu,\theta} \, := \, \frac{1}{2\sigma^2\nu} \, + \, \left( \frac{\theta}{2\sigma^2}  \right)^2
\end{equation}
and let $F^*_\nu$ be like in lemma \eqref{F_nu}. Introducing a Mellin representation for the Bessel function like in the proof of lemma~\ref{lemma:density_asymmetricVG}, and a supplementary Mellin representation for the exponential term $e^{\theta\frac{x}{\sigma^2}}$ in \eqref{VG_density_2} yields:
 
\begin{lemma}\label{lemma:density_asymmetricVG}
    Let $\mathrm{P}_1 :=( | \frac{\tau}{\nu} - \frac{1}{2} | ,\infty) \times (0,\infty) $; then for
    any $ (c_1,c_2) \in \mathrm{P}_1 $, the following Mellin-Barnes representation holds true:
    \begin{multline}\label{Density_asymmetricVG_MB}
        f_{\sigma,\nu,\theta}(x,\tau) \, = \, \frac{(\frac{2\sigma^2}{\nu}+\theta^2)^{-\frac{\alpha}{2}}}{ 2 \sqrt{2\pi} \sigma\nu^{\frac{\tau}{\nu}} \Gamma(\frac{\tau}{\nu}) } \, \times
        \\
        \int\limits_{c_1 - i\infty}^{c_1 + i\infty} \int\limits_{c_2 - i\infty}^{c_2 + i\infty} \, (-1)^{-s_2} \Gamma(s_2) F_{\nu}^*(s_1) \, q_{\sigma,\nu,\theta}^{-\frac{s_1}{2}} \left(  \frac{\theta}{\sigma^2} \right)^{-s_2}
        x^{-s_2} |x|^{-s_1+ \frac{\tau}{\nu} -\frac{1}{2} }
        \frac{\ud s_1 \ud s_2}{(2i\pi)^2}
        .
    \end{multline}
\end{lemma}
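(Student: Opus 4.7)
The plan is to combine the closed-form density \eqref{VG_density_2} with two Mellin--Barnes representations, exactly as in the proof of Lemma~\ref{lemma:density_symmetricVG} but with one extra step: an additional Mellin representation for the drift exponential $e^{\theta x/\sigma^2}$, which is the only new feature compared with the symmetric case.

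First, I would rewrite the Bessel argument in terms of $q_{\sigma,\nu,\theta}$. Since $\tfrac{1}{\sigma^4}(\tfrac{2\sigma^2}{\nu}+\theta^2) = 4\,q_{\sigma,\nu,\theta}$, one has $\tfrac{1}{\sigma^2}\sqrt{\tfrac{2\sigma^2}{\nu}+\theta^2}\,|x| = 2|x|\sqrt{q_{\sigma,\nu,\theta}}$. Substituting the standard representation
\[
K_\alpha(X)=\int_{c_1-i\infty}^{c_1+i\infty} 2^{s_1-2}\Gamma\!\left(\tfrac{s_1-\alpha}{2}\right)\Gamma\!\left(\tfrac{s_1+\alpha}{2}\right)X^{-s_1}\frac{\ud s_1}{2i\pi}
\]
with $\alpha = \tau/\nu - 1/2$ and $X = 2|x|\sqrt{q_{\sigma,\nu,\theta}}$ contributes the combination $\tfrac{1}{4}\,F_\nu^*(s_1)\,|x|^{-s_1}\,q_{\sigma,\nu,\theta}^{-s_1/2}$, valid for $c_1 > |\tau/\nu-1/2|$. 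Multiplying by the power prefactor $(x^2/(\tfrac{2\sigma^2}{\nu}+\theta^2))^{\alpha/2} = |x|^\alpha(\tfrac{2\sigma^2}{\nu}+\theta^2)^{-\alpha/2}$ in \eqref{VG_density_2} raises the exponent of $|x|$ to $-s_1+\tau/\nu-1/2$ and pushes $(\tfrac{2\sigma^2}{\nu}+\theta^2)^{-\alpha/2}$ into the overall constant.

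Next, I would insert the Mellin representation of the exponential already used in \eqref{MB_payoff_a/n}, namely $e^{y}=\int (-1)^{-s_2}\Gamma(s_2)\,y^{-s_2}\,\ud s_2/(2i\pi)$ (valid for $c_2>0$), applied with $y=\theta x/\sigma^2$. Splitting $y^{-s_2}=(\theta/\sigma^2)^{-s_2}\,x^{-s_2}$ supplies the remaining factors $(-1)^{-s_2}\Gamma(s_2)(\theta/\sigma^2)^{-s_2}\,x^{-s_2}$. Because the two Mellin transforms are performed in independent variables, the joint region of absolute convergence factorises as the Cartesian product $(|\tau/\nu-1/2|,\infty)\times(0,\infty) = \mathrm{P}_1$. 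Collecting the constants $1/(2\sqrt{2\pi}\,\sigma\,\nu^{\tau/\nu}\,\Gamma(\tau/\nu))$ then reproduces exactly the displayed double integral.

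The overall architecture is identical to that of the symmetric case, so I do not anticipate any substantive obstacle. The only points requiring care are the bookkeeping of the numerical constants (powers of $2$, $\sigma$ and $\nu$) when matching the final normalisation, and the branch convention for the signed quantity $x^{-s_2}$ when $x<0$ or $\theta<0$; both issues are handled by the $(-1)^{-s_2}$ prefactor, exactly in the same way as in \eqref{MB_payoff_a/n}.
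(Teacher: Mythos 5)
Your proposal is correct and follows exactly the route the paper indicates (the paper only sketches this proof in one sentence preceding the lemma: a Mellin--Barnes representation for the Bessel function as in Lemma~\ref{lemma:density_symmetricVG}, plus a supplementary Mellin representation for $e^{\theta x/\sigma^2}$). Your bookkeeping of the constants, the identity $\frac{1}{\sigma^4}(\frac{2\sigma^2}{\nu}+\theta^2)=4q_{\sigma,\nu,\theta}$, and the product convergence region $\mathrm{P}_1$ all check out against \eqref{Density_asymmetricVG_MB}.
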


Let us now introduce the Mellin-like transform of the payoff function:
\begin{equation}\label{asymmetric_Payoff_transform}
    P_{\sigma,\nu,\theta}^*(s_1,s_2) \, = \, \Gamma(s_2) \, \int\limits_{-\infty}^{\infty} \, \mathcal{P} \left( S e^{(r+\omega_{\sigma,\nu,\theta})\tau +x} , \underline{K}  \right) \, x^{-s_2} |x|^{-s_1+ \frac{\tau}{\nu} -\frac{1}{2} } \, \ud x
\end{equation}
and assume that it converges for $Re(s_1,s_2)\in\mathrm{P}_2$ for a certain subset $\mathrm{P}_2\subset\mathbb{R}^2$. As before, let us assume that $\alpha:=\frac{\tau}{\nu} - \frac{1}{2}\notin\mathbb{Z}$ and introduce the supplementary notation $\theta_\sigma:=\frac{\theta}{\sigma^2}$, so that we have:
\begin{equation}
    q_{\sigma,\nu,\theta} \, = \, \frac{1}{4} \left( \frac{1}{\sigma_\nu^2} \, + \, \theta_\sigma^2  \right)
\end{equation}
where $\sigma_\nu$ is the one defined in \eqref{asymmetricVG_notations}. As a consequence of lemma~\ref{lemma:density_symmetricVG} and of the pricing formula \eqref{Risk-neutral_2}, we have the following factorized formula:

\begin{proposition}[Factorization in the Mellin space]
    \label{prop:asymmetric_factorization}
    Let $(c_1,c_2)\in \mathrm{P}$ where $
    \mathrm{P} := \mathrm{P}_1 \cap \mathrm{P}_2
    $ is assumed to be nonempty. Then the value at time $t$ of a containgent claim $\mathcal{C}$ with maturity $T$ and payoff $\mathcal{P}(S_T,\underline{K})$ is equal to:
    \begin{multline}\label{asymmetric_Risk-neutral:convolution}
        \mathcal{C} (S,\underline{K},r,\tau,\sigma,\nu,\theta) \, = \, 
        \frac{e^{-r\tau}}{ 2^{2+2\alpha} \sqrt{\pi} \sigma_\nu^{1+2\alpha} \Gamma(\frac{\tau}{\nu}) } \, \times
        \\
        \int\limits_{c_1-i\infty}^{c_1+i\infty}\int\limits_{c_2-i\infty}^{c_2+i\infty} \, 
        P_{\sigma,\nu,\theta}^*(s_1,s_2) F_{\nu}^*(s_1)
        \, \left( -\theta_\sigma \right)^{-s_2}  \left( q_{\sigma,\nu,\theta} \right)^{-\frac{s_1+\alpha}{2}}  \, \frac{\ud s_1 \ud s_2}{(2i\pi)^2}
        .
    \end{multline}
\end{proposition}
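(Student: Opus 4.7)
The plan is to mimic the proof strategy used for Proposition 3.2 in the symmetric case, but with the additional Mellin variable $s_2$ coming from the asymmetry. First, I would insert the double Mellin-Barnes representation of the density from Lemma \ref{lemma:density_asymmetricVG} into the risk-neutral expectation (\ref{Risk-neutral_2}). This gives a triple integral: one integration over $x\in\mathbb{R}$ coming from the expectation, and the two vertical Mellin contours $c_1 + i\mathbb{R}$, $c_2 + i\mathbb{R}$.

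Next, I would swap the $x$-integral with the two Mellin contour integrals via Fubini's theorem. The justification is that, on the convergence polyhedron $\mathrm{P} = \mathrm{P}_1 \cap \mathrm{P}_2$ (which is nonempty by assumption), the Mellin integrand decays exponentially in the imaginary directions of $s_1,s_2$ by the Stirling asymptotics of the Gamma function, and the $x$-integral is by definition absolutely convergent on $\mathrm{P}_2$. Once the $x$-integral is performed first, it reproduces precisely $P^*_{\sigma,\nu,\theta}(s_1,s_2)/\Gamma(s_2)$ as defined in (\ref{asymmetric_Payoff_transform}), and the $\Gamma(s_2)$ factor present in (\ref{Density_asymmetricVG_MB}) cancels against the $1/\Gamma(s_2)$ so that only $P^*_{\sigma,\nu,\theta}(s_1,s_2)$ survives in the integrand. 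The factor $(-1)^{-s_2}(\theta/\sigma^2)^{-s_2}$ collapses to $(-\theta_\sigma)^{-s_2}$ as required.

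What remains is a bookkeeping step: I would rewrite the prefactor in terms of $\sigma_\nu$ and $q_{\sigma,\nu,\theta}$. Using $q_{\sigma,\nu,\theta} = (2\sigma^2/\nu + \theta^2)/(4\sigma^4)$ one gets $(2\sigma^2/\nu + \theta^2)^{-\alpha/2} = q_{\sigma,\nu,\theta}^{-\alpha/2} (2\sigma^2)^{-\alpha}$, and since $\tau/\nu = \alpha + 1/2$, the identity $\sigma_\nu^{1+2\alpha} = \sigma^{1+2\alpha}(\nu/2)^{\tau/\nu}$ rearranges the constants
\[
\frac{(2\sigma^2/\nu + \theta^2)^{-\alpha/2}}{2\sqrt{2\pi}\,\sigma\,\nu^{\tau/\nu}\Gamma(\tau/\nu)}\, q_{\sigma,\nu,\theta}^{-s_1/2}
\]
into the target prefactor $e^{-r\tau}/(2^{2+2\alpha}\sqrt{\pi}\,\sigma_\nu^{1+2\alpha}\Gamma(\tau/\nu))$ multiplied by $q_{\sigma,\nu,\theta}^{-(s_1+\alpha)/2}$. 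This yields exactly (\ref{asymmetric_Risk-neutral:convolution}).

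The only non-mechanical issue I foresee is verifying that $\mathrm{P}_1 \cap \mathrm{P}_2$ is indeed nonempty for the payoffs to which the proposition is applied — but since this is stated as a hypothesis of the proposition rather than claimed in the conclusion, the proof itself reduces to the Fubini swap and the algebraic simplification above, both of which are routine. The main conceptual content is recognizing that the extra integration variable $s_2$ introduced to handle the exponential tilt $e^{\theta x/\sigma^2}$ in the density plays the same role as the auxiliary variable used to Mellin-transform the payoff, which is precisely why the two can be fused into a single definition (\ref{asymmetric_Payoff_transform}) that already absorbs $\Gamma(s_2)$.
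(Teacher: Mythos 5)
Your proposal is correct and follows exactly the route the paper intends: the paper states Proposition~\ref{prop:asymmetric_factorization} as an immediate consequence of Lemma~\ref{lemma:density_asymmetricVG} and the risk-neutral pricing formula \eqref{Risk-neutral_2}, and your Fubini swap plus prefactor bookkeeping (which checks out, since $2\sigma^2/\nu+\theta^2=4\sigma^4 q_{\sigma,\nu,\theta}$ and $\sigma_\nu^{1+2\alpha}=\sigma^{1+2\alpha}(\nu/2)^{\tau/\nu}$) is precisely the omitted computation.
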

Let us now demonstrate how to implement proposition~\ref{prop:asymmetric_factorization} on the example of the cash-or-nothing option; we will denote the risk-neutral moneyness by
\begin{equation}\label{asymmetric_RN_moneyness}
    k_{\sigma,\nu,\theta} \, := \, k \, + \, \omega_{\sigma,\nu,\theta}\tau
    .
\end{equation}

\paragraph{Example: Digital option (cash-or-nothing)}
\begin{formula}[Cash-or-nothing call] 
    \label{formula:asymmetric_C/N}
    The value at time $t$ of a cash-or-nothing call option is: \\
    (i) (OTM price) If $ k_{\sigma,\nu} < 0$, 
    \begin{align}\label{asymmetric_C/N_OTM}
        & C_{c/n}^-(k_{\sigma,\nu,\theta},q_{\sigma,\nu,\theta},\sigma_\nu,\theta_\sigma) \, = \,      \frac{e^{-r\tau}}{ 2^{2+2\alpha} \sqrt{\pi} \sigma_\nu^{1+2\alpha} \Gamma(\frac{\tau}{\nu})} \, \times \nonumber
        \\
        & \left[ \sum\limits_{n=0}^{\infty} \frac{1}{n!} \Gamma\left(\frac{n+1}{2}\right)\Gamma\left(\frac{n+1}{2} + \alpha\right)
        q_{\sigma,\nu,\theta}^{-\frac{n+1}{2} - \alpha}\theta_\sigma^{n}
        \, +
        2 \sum\limits_{\substack{n_1 = 0 \\ n_2 = 0}}^{\infty} \frac{(-1)^{n_1}}{n_1!n_2!}
        \right. \nonumber
        \\
        & \left.
        \left( \frac{\Gamma(\alpha-n_1)}{-2n_1-n_2-1}(-k_{\sigma,\nu,\theta})^{2n_1+n_2+1}q_{\sigma,\nu,\theta}^{-\alpha+n_1} 
        + \frac{\Gamma(-\alpha-n_1)}{-2n_1-n_2-1-2\alpha}(-k_{\sigma,\nu,\theta})^{2n_1+n_2+1+2\alpha}q_{\sigma,\nu,\theta}^{n_1} 
        \right)
        {\theta_\sigma}^{n_2}
        \right]
        .
    \end{align}
    (ii) (ITM price) If $ k_{\sigma,\nu} > 0$,
        \begin{equation}\label{asymmetric_C/N_ITM}
        C_{c/n}^+(k_{\sigma,\nu,\theta},q_{\sigma,\nu,\theta},\sigma_\nu,\theta_\sigma) \, = \, e^{-r\tau} \, - \, C_{c/n}^-(-k_{\sigma,\nu,\theta},q_{\sigma,\nu,\theta},\sigma_\nu,-\theta_\sigma)
        .
        \end{equation}
    (iii) (ATM price) If $ k_{\sigma,\nu} = 0 $,
        \begin{equation}\label{asymmetric_C/N_0}
        C_{c/n}^{atm}(q_{\sigma,\nu,\theta},\sigma_\nu,\theta_\sigma) \, = \, 
        \frac{e^{-r\tau}}{ 2^{2+2\alpha} \sqrt{\pi} \sigma_\nu^{1+2\alpha} \Gamma(\frac{\tau}{\nu})}\sum\limits_{n=0}^{\infty} \frac{1}{n!} \Gamma\left(\frac{n+1}{2}\right)\Gamma\left(\frac{n+1}{2} + \alpha\right)
        q_{\sigma,\nu,\theta}^{-\frac{n+1}{2} - \alpha}\theta_\sigma^{n}
        .
        \end{equation} 
\end{formula}

\begin{proof}
    To prove (i), we use notation \eqref{asymmetric_RN_moneyness} for the moneyness to write the cash-or-nothing payoff as:
    \begin{equation}
        \mathcal{P}_{c/n} (Se^{(r+\omega_{\sigma,\nu,\theta})\tau+x},K) \, = \,  \mathbbm{1}_{ \{x\geq-k_{\sigma,\nu,\theta}\} }
    \end{equation}
    and therefore the $P^*_{\sigma,\nu,\theta}(s_1,s_2)$ function \eqref{asymmetric_Payoff_transform} is:
    \begin{equation}
        P^*_{\sigma,\nu,\theta}(s_1,s_2) \, = \, \Gamma(s_2) \, \int\limits_{-k_{\sigma,\nu,\theta}}^\infty \, x^{-s_1-s_2+\alpha} \, \ud x 
        \, = \, \Gamma(s_2) \frac{(-k_{\sigma,\nu,\theta})^{-s_1-s_2+1+\alpha}}{s_1+s_2-1-\alpha} 
    \end{equation}
    and exists as $k_{\sigma,\nu,\theta}<0$. Inserting in the pricing formula \eqref{asymmetric_Risk-neutral:convolution}, the OTM cash-or-nothing call thus reads:
    \begin{multline}\label{asymmetric_C/N_OTM_integral}
        \mathcal{C}_{c/n} (S,\underline{K},r,\tau,\sigma,\nu,\theta) \, = \, 
        \frac{e^{-r\tau}}{ 2^{2+2\alpha} \sqrt{\pi} \sigma_\nu^{1+2\alpha} \Gamma(\frac{\tau}{\nu}) } \, \times
        \\
        \int\limits_{c_1-i\infty}^{c_1+i\infty}\int\limits_{c_2-i\infty}^{c_2+i\infty} \, 
        \frac{\Gamma(\frac{s_1- \alpha}{2} ) \Gamma( \frac{s_1 + \alpha}{2} ) \Gamma(s_2) }{s_1+s_2-1-\alpha}
        \, (-k_{\sigma,\nu,\theta})^{-s_1-s_2+1+\alpha} \left( -\theta_\sigma \right)^{-s_2}  \left( q_{\sigma,\nu,\theta} \right)^{-\frac{s_1+\alpha}{2}}  \, \frac{\ud s_1 \ud s_2}{(2i\pi)^2}
    \end{multline}
where $(c_1,c_2)$ belongs to the convergence polyhedron of \eqref{asymmetric_C/N_OTM_integral} $\mathrm{P}:=\{ (s_1,s_2)\in\mathbb{C}^2, Re(s_1) > |\alpha|, Re(s_2) > 0 \}$. The associated characteristic vector is 
\begin{equation}
    \Delta \, = \,
    \begin{bmatrix}
        1\\1
    \end{bmatrix}
\end{equation}
and therefore, the cone $\Pi_:=\{ (s_1,s_2)\in\mathbb{C}^2, Re(s_1) < |\alpha|, Re(s_2) < 0 \}$ is located in the admissible region and is compatible with the divisors of \eqref{asymmetric_C/N_OTM_integral}; thus, from the residue theorem \eqref{Tsikh_residue}, the integral equals the sum of the residues of its integrand in $\Pi$. There are three series of poles:
    \begin{itemize}
        \item[-] Series 1: when $\Gamma(s_2)$ is singular and $s_1+s_2-1-\alpha=0$, that is when $(s_1,s_2)=(n+1+\alpha,-n)$, $n\in\mathbb{N}$, resulting in the first series in \eqref{asymmetric_C/N_OTM};
        \item[-] Series 2: when $\Gamma(\frac{s_1- \alpha}{2} ) $ and $\Gamma(s_2)$ are singular, that is when $(s_1,s_2) = (\alpha-2n_1,-n_2)$, $n_1,n_2\in\mathbb{N}$;
        \item[-] Series 3: when $\Gamma(\frac{s_1 + \alpha}{2} ) $ and $\Gamma(s_2)$ are singular, that is when $(s_1,s_2) = (-\alpha-2n_1,-n_2)$,  $n_1,n_2\in\mathbb{N}$; series 2 and 3 result in the second series in \eqref{asymmetric_C/N_OTM}.
    \end{itemize}

\noindent To prove (ii), we write 
\begin{equation}
    \mathbb{E}_t^{\mathbb{Q}} [\mathbbm{1}_{\{S_T>K\}}] \, = \, 1 \, - \, \mathbb{E}_t^{\mathbb{Q}} [\mathbbm{1}_{\{S_T<K\}}]
\end{equation}
and apply the same technique than in (i) to the expectation in the right hand side.

\noindent Last, (iii) results from letting $k_{\sigma,\nu,\theta}\rightarrow 0$ in \eqref{asymmetric_C/N_OTM} or \eqref{asymmetric_C/N_ITM}.
\end{proof}

\section{Comparisons with numerical results}\label{sec:tests}

We now test the series formulas obtained in the paper for the European and digital options in both the symmetric and asymmetric case, by benchmarking the results with various numerical techniques.

\subsection{Fourier and FFT techniques}\label{subsec:Fourier}

In this section, we focus on comparing our results with some classic Fourier-related techniques. First, let us recall that, following \cite{Lewis01}, digital option prices admit convenient representation involving the risk-neutral characteristic function and the log-forward moneyness; the asset-or-nothing call can be written as
\begin{equation}\label{Lewis_A/N}
    C_{a/n}(S,K,r,\tau,\sigma,\nu,\theta) \, = \, S \, \left( \, \frac{1}{2} \,+ \, \frac{1}{\pi}\int\limits_0^{\infty} \, Re \, \left[ \frac{e^{i u k} \tilde{\Phi}_{\sigma,\nu,\theta}(u-i,\tau)}{iu} \right] \ud u \, \right)
    ,
\end{equation}
and the cash-or-nothing call as
\begin{equation}\label{Lewis_C/N}
    C_{c/n}(S,K,r,\tau,\sigma,\nu,\theta) \, = \, e^{-r\tau} \, \left( \, \frac{1}{2} \,+ \, \frac{1}{\pi}\int\limits_0^{\infty} \, Re \, \left[ \frac{e^{i u k} \tilde{\Phi}_{\sigma,\nu,\theta}(u,\tau)}{iu} \right] \ud u \, \right)
    ,
\end{equation}
where $k:=\log\frac{S}{K}+r\tau$ as before, and where the characteristic function \eqref{characteristic_function} has been normalized by the martingale adjustment
\begin{equation}
    \tilde{\Phi}_{\sigma,\nu,\theta} (u,t) \, := \, e^{i u t \omega_{\sigma,\nu,\theta} } \Phi_{\sigma,\nu,\theta}(u,t)
\end{equation}
so that the martingale condition $\tilde{\Phi}_{\sigma,\nu,\theta} (-i,t)=1$ holds true. Second, regarding European options, we will use the representation given in \cite{Carr99} based on the introduction of a dampling factor $a$ to avoid the divergence in $u=0$; namely, let
\begin{equation}
    \varphi_{\sigma,\nu,\theta}(u,t) \, := \, e^{i u [\log S + (r + \omega_{\sigma,\nu,\tau}) t]} \, \Phi_{\sigma,\nu,\theta}(u,t) 
    ,
\end{equation}
then the European option price admits the representation:
\begin{equation}\label{CarrMadan}
    C_{eur}(S,K,r,\tau,\sigma,\nu,\theta) \, = \, \frac{e^{-a\log K - r\tau}}{\pi} \, \int\limits_0^{\infty} \,
    e^{-i u \log K}
    Re \left[  \frac{\varphi_{\sigma,\nu,\theta}(u-(a+1)i,\tau)}{a^2+a-u^2+i(2a+1)u}  \right]
    \, \ud u
    ,
\end{equation}
where $a<0<a_{max}$, and $a_{max}$ is determined by the square integrability condition $\varphi_{\sigma,\nu,\theta}(-(a+1)i,\tau)<\infty$; in the symmetric VG model,
\begin{equation}
    a_{max} \, = \, \frac{1}{\sigma}\sqrt{\frac{2}{\nu}} \, - \, 1
    .
\end{equation}
Integrals \eqref{Lewis_A/N}, \eqref{Lewis_C/N} and \eqref{CarrMadan} can be carried out very easily via a classical recursive algorithm after truncating the integration region (or, in the case of \eqref{CarrMadan}, by a Fast Fourier Transform algorithm); typically, restraining the Fourier variable to $u\in[0,10^4]$ is sufficient to obtain an excellent level of precision when the time to maturity is not too small; however, as we will see, in the limit $\tau\rightarrow 0$ the numerical evaluation of \eqref{CarrMadan} has more difficulties to converge if the option is deep OTM, and in that case the series in formula~\ref{formula:symmetric_European} constitute a more efficient tool.

\paragraph{Symmetric process}
In table~\ref{tab:symmetric_C/N}, we compare the prices obtained by the series in formula~\ref{formula:symmetric_C/N} truncated at $n=n_{max}$, with a numerical evaluation of \eqref{Lewis_C/N}; we investigate several market configurations, from ITM to OTM options, for long and short maturities.
\begin{table}[ht]
 \caption{Cash-or-nothing prices given by formula~\ref{formula:symmetric_C/N} for various truncations and different market configurations, vs. numerical evaluation of the integral \eqref{Lewis_C/N}. Parameters: $K=4000$, $r=1\%$, $\sigma=0.2$, $\nu=0.85$.}
 \label{tab:symmetric_C/N}       
 \centering
 \begin{tabular}{cccccc}
 \hline\noalign{\smallskip}
 & $n_{max}=3$ & $n_{max}=5$ & $n_{max}=10$ & $n_{max}=15$ & Lewis \eqref{Lewis_C/N} [$\theta=0$]  \\
 \noalign{\smallskip}\hline\noalign{\smallskip}
 \multicolumn{2}{l}{{\bfseries Long term options} ($\tau=2$)} & & & \\
$S=5000$  & 0.8870 & 0.7885 & 0.7755 & 0.7754 & 0.7754   \\
$S=4200$  & 0.5372 & 0.5373 & 0.5373 & 0.5373 & 0.5373  \\
ATM       & 0.4901 & 0.4901 & 0.4901 & 0.4901 & 0.4901  \\
$S=3800$  & 0.3734 & 0.3739 & 0.3740 & 0.3740 & 0.3740 \\
$S=3000$  &-0.9399 &-0.1439 & 0.1159 & 0.1181 & 0.1181 \\
  \noalign{\smallskip}\hline\noalign{\smallskip}
 \multicolumn{2}{l}{{\bfseries Short term options} ($\tau=0.5$)} & & & \\
$S=5000$  & 0.8415 & 0.9355 & 0.9410 & 0.9410 & 0.9410 \\
$S=4200$  & 0.7203 & 0.7104 & 0.7104 & 0.7104 & 0.7104  \\
ATM       & 0.4975 & 0.4975 & 0.4975 & 0.4975 & 0.4975 \\
$S=3800$  & 0.2487 & 0.2486 & 0.2486 & 0.2486 & 0.2486 \\
$S=3000$  & 0.4814 & 0.0731 & 0.0282 & 0.0281 & 0.0281 \\
 \noalign{\smallskip}\hline
\end{tabular}
\end{table}

We perform the same analysis for the asset-or-nothing call in table~\ref{tab:symmetric_A/N}; the series given by formula~\ref{formula:symmetric_A/N} are truncated at $n=m=max$, and compared to a numerical evaluation of \eqref{Lewis_A/N}. 
\begin{table}[ht]
 \caption{Asset-or-nothing prices given by formula~\ref{formula:symmetric_A/N} for various truncations and different market configurations, vs. numerical evaluation of the integral \eqref{Lewis_A/N}. Parameters: $K=4000$, $r=1\%$, $\sigma=0.2$, $\nu=0.85$.}
 \label{tab:symmetric_A/N}       
 \centering
 \begin{tabular}{cccccc}
 \hline\noalign{\smallskip}
 & ${max}=3$ & ${max}=5$ & ${max}=10$ & ${max}=15$ & Lewis \eqref{Lewis_A/N} [$\theta=0$]  \\
 \noalign{\smallskip}\hline\noalign{\smallskip}
 \multicolumn{2}{l}{{\bfseries Long term options} ($\tau=2$)} & & & \\
$S=5000$  & 4758.18 & 4374.28 & 4307.02 & 4306.93 & 4306.93  \\
$S=4200$  & 2739.14 & 2737.53 & 2737.49 & 2737.49 & 2737.49  \\
ATM       & 2472.07 & 2474.66 & 2474.72 & 2474.72 & 2474.72  \\
$S=3800$  & 1851.86 & 1855.46 & 1855.51 & 1855.51 & 1855.51 \\
$S=3000$  &-3390.31 &-284.115 & 560.097 & 568.846 & 568.846 \\
  \noalign{\smallskip}\hline\noalign{\smallskip}
 \multicolumn{2}{l}{{\bfseries Short term options} ($\tau=0.5$)} & & & \\
$S=5000$  & 4192.59 & 4769.19 & 4806.50 & 4806.52 & 4806.50   \\
$S=4200$  & 3168.47 & 3168.75 & 3168.74 & 3168.74 & 3168.74  \\
ATM       & 2196.77 & 2197.07 & 2197.07 & 2197.07 & 2797.07  \\
$S=3800$  & 1113.19 & 1113.80 & 1113.80 & 1113.80 & 1113.80 \\
$S=3000$  & 1265.39 & 222.654 & 127.749 & 127.292 & 127.293  \\
 \noalign{\smallskip}\hline
\end{tabular}
\end{table}

We can remark that both in the cash-or-nothing and the asset-or-nothing cases, the convergence to the option price is very fast; in particular when the underlying is close to the money, it is sufficient to consider only terms up to $n=5$ to get a precision of $10^{-3}$ in the asset-or-nothing price, independently of the maturity. We also note that, as a whole, the convergence goes faster in the ITM region, and is slightly slower when the option is very deep OTM; this is because, for fixed $\tau$, the moneyness is logarithmically driven:
\begin{equation}
    |k_{\sigma,\nu}| \, \underset{S\rightarrow 0}{\sim} \, |\log S|
\end{equation}
and therefore becomes bigger when $S$ decreases, which tends to slow down the convergence of the OTM series \eqref{symmetric_A/N_OTM} and \eqref{symmetric_C/N_OTM}. On the contrary, as $|k_{\sigma,\nu}|$ grows very slowly when $S$ grows, the convergence of the ITM series \eqref{symmetric_A/N_ITM} and \eqref{symmetric_C/N_ITM} remains fast even for deep ITM situations. Last, a remarkable feature is the short maturity behavior of the series; as
\begin{equation}
    |k_{\sigma,\nu}| \, \underset{\tau\rightarrow 0}{\sim} \, \left| \log \frac{S}{K} \right|
\end{equation}
then if the ratio $S/K$ is not too big or too small, the moneyness will be close to $0$ and will trigger an accelerated convergence to the option price. This is a very interesting feature, as the short maturity situation is generally not favorable for numerical Fourier inversion (see \cite{Carr99}; the numerical oscillation for $\tau\rightarrow 0$ leads the authors to introduce a "modified time" approach, involving a multiplication by a hyperbolic sine function instead of the usual Fourier kernel). This situation is displayed in table~\ref{tab:symmetric_small_tau}. Observe in particular that, for deep OTM situations, the numerical inversion fails to converge even when integrating up to $u=10^4$; on the contrary, the series \eqref{symmetric_European_OTM} converges to the price with a precision of $10^{-3}$ when considering only terms up to $n=m=20$ (or even $10$ if the option is not to deep OTM).  
\begin{table}[ht]
 \caption{Short maturity prices, obtained via the OTM series \eqref{symmetric_European_OTM} for the Europan prices truncated at $n=m=max$ vs. numerical evaluations of \eqref{CarrMadan} for various upper bounds $u_{max}$ of the integration region; the series \eqref{symmetric_European_OTM} converges extremely fast to the price, while numerical integration has more difficulty to converge, notably in the deep OTM region. Parameters: $K=4000$, $r=1\%$, $\sigma=0.2$, $\nu=0.85$, $a=1$.}
 \label{tab:symmetric_small_tau}       
 \centering
 \begin{tabular}{lccc}
  \hline\noalign{\smallskip}
  {\bfseries OTM (S=3000)} &  \multicolumn{3}{c}{ Series \eqref{symmetric_European_OTM} }   \\
  & ${max}=5$ & ${max}=10$ & ${max}=20$ \\
 \noalign{\smallskip}\hline\noalign{\smallskip}
 $\tau=\frac{1}{12}$ (1 month)  & 3.953 & 1.802 & 1.802   \\
 $\tau=\frac{1}{52}$ (1 week)   & 0.804 & 0.388 & 0.388  \\
 $\tau=\frac{1}{360}$ (1 day)   & 0.113 & 0.055 & 0.055  \\
 \noalign{\smallskip}
 \hline
  &  \multicolumn{3}{c}{ Carr-Madan \eqref{CarrMadan} }   \\
  & $u_{max}=10^2$ & $u_{max}=10^3$ & $u_{max}=10^4$ \\
 \noalign{\smallskip}\hline\noalign{\smallskip}
 $\tau=\frac{1}{12}$ (1 month)  & 1.870 & 1.803 & 1.802   \\
 $\tau=\frac{1}{52}$ (1 week)   & 0.476 & 0.390 & 0.388  \\
 $\tau=\frac{1}{360}$ (1 day)   & 0.149 & 0.057 & 0.055 \\
 \noalign{\smallskip}
 \hline
 \hline
 \noalign{\smallskip}
   {\bfseries Deep OTM (S=2000)} &  \multicolumn{3}{c}{ Series \eqref{symmetric_European_OTM} }   \\
  & ${max}=10$ & ${max}=20$ & ${max}=30$ \\
 \noalign{\smallskip}\hline\noalign{\smallskip}
 $\tau=\frac{1}{12}$ (1 month)  & 4.9881 & 0.0470 & 0.0470   \\
 $\tau=\frac{1}{52}$ (1 week)   & 0.9093 & 0.0096 & 0.0096  \\
 $\tau=\frac{1}{360}$ (1 day)   & 0.1246 & 0.0013 & 0.0013 \\
 \noalign{\smallskip}
 \hline
  &  \multicolumn{3}{c}{ Carr-Madan \eqref{CarrMadan} }   \\
  & $u_{max}=10^2$ & $u_{max}=10^3$ & $u_{max}=10^4$ \\
 \noalign{\smallskip}\hline\noalign{\smallskip}
 $\tau=\frac{1}{12}$ (1 month)  & 0.0410 & 0.0469 & 0.0467  \\
 $\tau=\frac{1}{52}$ (1 week)   & 0.0031 & 0.0093 & 0.0146  \\
 $\tau=\frac{1}{360}$ (1 day)   &-0.0051 & 0.0009 & 0.0115  \\
 \noalign{\smallskip}
 \hline
\end{tabular}
\end{table}

\paragraph{Asymmetric process}
In table~\ref{tab:asymmetric_C/N}, we compare the prices obtained by the series in formula~\ref{formula:asymmetric_C/N} truncated at $n=m={max}$, with a numerical evaluation of \eqref{Lewis_C/N}, for negative or positive asymmetry parameter $\theta$; again, the convergence is very fast in every situation. Note that the moneyness \eqref{asymmetric_RN_moneyness} is a function of $\theta$; for the set of parameters of table~\ref{tab:symmetric_C/N}, this results in two distinct ATM prices:
\begin{equation}
    \left\{
    \begin{aligned}
        S_{ATM}^+ \,= \, Ke^{-(r+\omega_{\sigma,\nu,\theta})\tau} \, = \, 5050.24 \hspace*{0.5cm} [\theta=+0.1] \\
        S_{ATM}^- \,= \, Ke^{-(r+\omega_{\sigma,\nu,\theta})\tau} \, = \, 3358.52 \hspace*{0.5cm} [\theta=-0.1]   
        .
    \end{aligned}
     \right.
\end{equation}

\begin{table}[ht]
 \caption{Cash-or-nothing prices given by formula~\ref{formula:asymmetric_C/N} for various truncations and positive or negative asymmetry, vs. numerical evaluation of the integral \eqref{Lewis_C/N}. Parameters: $K=4000$, $r=1\%$, $\tau=2$, $\sigma=0.2$, $\nu=0.85$.}
 \label{tab:asymmetric_C/N}       
 \centering
 \begin{tabular}{cccccc}
 \hline\noalign{\smallskip}
 & ${max}=3$ & ${max}=5$ & ${max}=10$ & ${max}=15$ & Lewis \eqref{Lewis_C/N} [$\theta=+0.1$]  \\
 \noalign{\smallskip}\hline\noalign{\smallskip}
$S=6000$  & 0.9096 & 0.9008 & 0.8993 & 0.8993 & 0.8993  \\
ATM       & 0.7064 & 0.7257 & 0.7288 & 0.7288 & 0.7288  \\
$S=3000$  &-15.545 &-0.3877 & 0.1364 & 0.1364 & 0.1364 \\
 \hline\noalign{\smallskip}
 & ${max}=3$ & ${max}=5$ & ${max}=10$ & ${max}=15$ & Lewis \eqref{Lewis_C/N} [$\theta=-0.1$]  \\
 \noalign{\smallskip}\hline\noalign{\smallskip}
$S=5000$  & 1.7747 & 0.7749 & 0.7605 & 0.7605 & 0.7605  \\
ATM       & 0.2412 & 0.2500 & 0.2514 & 0.2514 & 0.2514  \\
$S=2000$  &-1.1760 &-0.0400 & 0.0047 & 0.0047 & 0.0047 \\
 \noalign{\smallskip}\hline
\end{tabular}
\end{table}

In table~\ref{tab:asymmetric_small_tau} we examinate the case $\tau\rightarrow 0$; like for the symmetric model, the convergence of formula~\ref{formula:asymmetric_C/N} is particularly fast; notably, for very short options (under one month), $n_{max}=m_{max}=3$ is sufficient to get a precision of $10^{-2}$.

\begin{table}[ht]
 \caption{Short maturity cash-or-nothing prices, obtained via the series in formula~\ref{formula:asymmetric_C/N} truncated at $n=m=max$, positive and negative asymmetry; like in table~\ref{tab:symmetric_small_tau} (symmetric model), the series converge extremely fast. Parameters: $S=4200$, $K=4000$, $r=1\%$, $\sigma=0.2$, $\nu=0.85$.}
 \label{tab:asymmetric_small_tau}       
 \centering
 \begin{tabular}{lcccc}
  \hline\noalign{\smallskip}
 &  \multicolumn{4}{c}{Positive asymmetry [$\theta=+0.1$]}   \\
 & ${max}=3$  & ${max}=5$ & ${max}=10$ & ${max}=15$ \\
  \noalign{\smallskip}\hline\noalign{\smallskip}
 $\tau=\frac{1}{2}$ (6 months)  & 0.5370 & 0.5396 & 0.5398 & 0.5398 \\
 $\tau=\frac{1}{12}$ (1 month)  & 0.9401 & 0.9399 & 0.9399 & 0.9399 \\
 $\tau=\frac{1}{52}$ (1 week)   & 0.9873 & 0.9872 & 0.9872 & 0.9872 \\
 $\tau=\frac{1}{360}$ (1day)    & 0.9982 & 0.9982 & 0.9982 & 0.9982 \\
 \noalign{\smallskip}\hline
 \hline\noalign{\smallskip}
 &  \multicolumn{4}{c}{Negative asymmetry [$\theta=-0.1$]}   \\
 & ${max}=3$  & ${max}=5$ & ${max}=10$ & ${max}=15$ \\
 \noalign{\smallskip}\hline\noalign{\smallskip}
 $\tau=\frac{1}{2}$ (6 months)  & 0.7315 & 0.7290 & 0.7287 & 0.7287 \\
 $\tau=\frac{1}{12}$ (1 month)  & 0.9187 & 0.9184 & 0.9184 & 0.9184  \\
 $\tau=\frac{1}{52}$ (1 week)   & 0.9786 & 0.9786 & 0.9786 & 0.9786 \\
 $\tau=\frac{1}{360}$ (1day)    & 0.9968 & 0.9968 & 0.9968 & 0.9968 \\
 \noalign{\smallskip}\hline
\end{tabular}
\end{table}

\subsection{Gauss-Laguerre approximation for the European call}

Using a Gauss-Laguerre quadrature for the VG density function, \cite{Loregian12} obtain an approximation formula for the price of a European call, under the form of a linear convex combination of Black-Scholes prices (see Appendix \ref{app:LG} for details and notations). In the symmetric case ($\theta=0$), the formula reads
\begin{multline}\label{Loregian}
     C_{eur}(S,K,r,\tau,\sigma,\nu) \, \simeq \,
     S \, \sum\limits_{i=1}^n e^{ \overline{\omega}_{\sigma,\nu} \tau + \frac{\sigma^2}{2} \nu u_i } N(d_1(u_i)) p \left(u_i,\frac{\tau}{\nu} \right) 
    \, - \,
    Ke^{-r\tau} \, \sum\limits_{i=1}^n N(d_2(u_i)) \,  p \left(u_i,\frac{\tau}{\nu} \right) 
    .
\end{multline}

\begin{table}[ht]
 \caption{Comparisons between formula \eqref{formula:symmetric_European} truncated at $n=m=max$, and the Gauss-Laguerre quadrature \eqref{Loregian} truncated at $n=max$. Parameters: $K=4000$, $r=1\%$, $\tau=2$, $\sigma=0.2$, $\nu=0.85$.}
 \label{tab:Loregian}       
 \centering
 \begin{tabular}{lcccccc}
 \hline
 & \multicolumn{3}{c}{Formula \eqref{formula:symmetric_European}} & \multicolumn{3}{c}{Loregian \& al. \eqref{Loregian}} \\
  & $max=5$ & $max=10$ & $max=15$ & $max=5$ & $max=10$ & $max=15$  \\
  \hline
  $S=4500$ & 799.693  & 799.497 & 799.497 & 800.095 & 799.61 & 799.537 \\
  {\it relative error} & 0.02\% & 0.00\% & 0.00\% & 0.07\% & 0.01\% & 0.01\% \\
  \hline
  $ATM$    & 514.26 & 514.325 & 514.325 & 514.824 & 514.425 & 514.364 \\
  {\it relative error} & 0.01\% & 0.00\% & 0.00\% & 0.10\% & 0.02\% & 0.01\% \\
  \hline
  $S=3500$ & 234.803 & 232.197 & 232.197 & 232.768 &  232.267 & 232.214  \\
  {\it relative error}  & 1.12\% & 0.00\% & 0.00\% & 0.25\% & 0.03\% &  0.01\%  \\
 \hline
\end{tabular}
\end{table}

In table \ref{tab:Loregian}, the results obtained by truncations of the pricing formula \ref{formula:symmetric_European} and of the Gauss-Laguerre quadrature \eqref{Loregian} are compared and shown to provide excellent agreement. In terms of efficiency, the computation via our series formula \eqref{formula:symmetric_European} converges with a precision of $10^{-3}$ after $10$ iterations while the Gauss-Laguerre quadrature still presents a $0.01\%$ error after $15$ iterations; also note that the results are obtained instantaneously from formula \ref{formula:symmetric_European}, while letting $n\geq 10$ becomes time consuming when implementing the quadrature formula \eqref{Loregian}. Last, let us mention that if the shape parameter of the Gamma mixing density is smaller than $1$, then it is preferable to implement a generalized version of the Gauss-Laguerre quadrature, due to the presence of singularities around zero (see \cite{Rabinowitz67} and references therein).

\subsection{Monte Carlo pricing}

Monte Carlo simulations are well known and an abundant literature is available on both standard simulations and their variance reduction techniques, such as the antithetic variates method (see e.g. the classic reference \cite{Glasserman04}). In the particular case of the VG model, several algorithms for simulating the VG process are provided in \cite{Fu07}, based on time change Brownian motion, difference of Gamma processes or approximation of a compound Poisson process; in this paper we will rely on the VarianceGammaDistribution provided by Mathematica, whose density writes:
\begin{equation}
    \mathrm{VG} \left[ x, \lambda , \alpha , \beta, \mu   \right]
    \, := \,
    \frac{(\alpha^2 - \beta^2)^{\lambda}}{ \sqrt{\pi} \Gamma(\lambda) (2\alpha)^{\lambda - \frac{1}{2}} } \,
    e^{\beta (x-\mu)} \, | x-\mu |^{\lambda-\frac{1}{2}} \,
    K_{\lambda - \frac{1}{2}}  (\alpha | x - \mu| )
\end{equation}

\paragraph{Symmetric process} 
Define, for $n\in\mathbb{N}\backslash \{0\}$ and $i=1 \dots n$, 
\begin{equation}\label{MC_European}
     C_{eur}^{(i)} \, := \, e^{-r\tau} \,
    \left[ S e^{(r+\omega_{\sigma,\nu})\tau + Z^{(i)}} \, - \, K    \right]^+  
    \, , \hspace{0.5cm}
      C_{eur}^{(n)} \, := \, \frac{1}{n} \sum\limits_{i=1}^n C_{eur}^{(i)}
\end{equation}
where the $Z^{(i)}$ are independent and identically distributed according to the symmetric VG density
\begin{equation}
        Z^{(i)} \sim \mathrm{VG}\left[ x,\frac{\tau}{\nu},\frac{1}{\sigma}\sqrt{\frac{2}{\nu}},0,0\right ]
        .
\end{equation}
It is a consequence of the strong law of large numbers that $C_{eur}^{(n)} \longrightarrow \mathbb{E}_t^{\mathbb{Q}} \left[ e^{-r\tau} [S_T - K]^+\right]$ almost surely when $n\rightarrow\infty$; the 95\% confidence interval is $C_{eur}^{(n)} \pm 1.96 \, \sigma_P / \sqrt{n}$ where
\begin{equation}
    \sigma_P \, := \, \sqrt{\mathrm{var} \{ C_{eur}^{(i)} \}_{i=1 \dots n}  }
\end{equation}
and the confidence ratio is the length of the interval divided by $C_{eur}^{(n)}$. In table \ref{tab:MC} we compare the Monte Carlo prices for various numbers of paths, with truncations of the pricing formula \ref{formula:symmetric_European}; we observe that, as expected, formula \ref{formula:symmetric_European} shows a very good agreement with the Monte Carlo method, but converges far more rapidly and precisely. Moreover from $n \geq 10^4$, the high number of simulations strongly reduces the computational performance of the Monte Carlo method, and, in the OTM regions, confidence ratios are not as good as in the ITM region. For values of parameters like in table \ref{tab:MC}, the confidence ratio after $n=10^4$ simulations is of 0.9\% for the ITM cash-or-nothing call, but of nearly 5\% for the OTM European call. Variance in the OTM region can be reduced by implementing more sophisticated simulations, featuring for instance importance sampling methods \citep{Su00}.

\begin{table}[ht]
 \caption{Prices obtained for the European and the cash-or-nothing calls, via formulas \ref{formula:symmetric_European} (symmetric model), \ref{formula:asymmetric_C/N} (asymmetric model) and Monte Carlo simulations. We observe the accelerated convergence of our formulas when compared to the standard Monte Carlo simulations; after $n=10^4$ paths,  the confidence ratio is of $0.97\%$ for the ITM cash-or-nothing call but only of $2.7\%$ for the ITM European call, while formulas \ref{formula:symmetric_European} and \ref{formula:asymmetric_C/N} converge with of precision of $10^{-3}$ after 10 iterations.  Parameters: $K=4000$, $r=1\%$, $\tau=2$, $\sigma=0.2$, $\nu=0.85$.}
 \label{tab:MC}       
 \centering
 \begin{tabular}{lccc}
 \hline
 \multicolumn{4}{c}{{\bfseries Symmetric European call [$\theta = 0$}] }  \\
 \hline
 \multicolumn{4}{c}{Formula \ref{formula:symmetric_European}} \\
 & $max=5$ & $max=10$ & $max=15$   \\
 S=4500 & 799.693  & 799.497 & 799.497   \\
 ATM    & 514.26 & 514.325 & 514.325   \\
 S=3500 & 234.803 & 232.197 & 232.197  \\
 \hline
 \multicolumn{4}{c}{Monte Carlo simulation \eqref{MC_European}} \\
  & $n=10^2$ & $n=10^3$ & $n=10^4$  \\
 S=4500 & 757.729 & 784.85  & 801.309  \\
 ATM    & 618.979 & 512.819 & 514.772 \\
 S=3500 & 182.799 & 233.659 & 235.117 \\
 \hline
 \hline
 \multicolumn{4}{c}{{\bfseries Asymmetric cash-or-nothing call [$\theta = - 0.1$}] }  \\
 \hline
 \multicolumn{4}{c}{Formula \ref{formula:asymmetric_C/N}} \\
 & $max=5$ & $max=10$ & $max=15$   \\
 S=4500 & 0.662176 & 0.658984 & 0.658968  \\
 ATM    & 0.254457 & 0.251418 & 0.251402  \\
 S=3000 & 0.122406 & 0.123851 & 0.123843  \\
 \hline
 \multicolumn{4}{c}{Monte Carlo simulation \eqref{MC_C/N}} \\
  & $n=10^2$ & $n=10^3$ & $n=10^4$  \\
 S=4500 & 0.588119 & 0.668495 & 0.66036  \\
 ATM    & 0.303862 & 0.252891 & 0.253675  \\
 S=3000 & 0.107822 & 0.128406 & 0.120466  \\
 \hline
\end{tabular}
\end{table}

\paragraph{Asymmetric process.} The method remains the same for computing the cash-or-nothing call in the asymmetric VG model. The simulated payoff in this case reads:
\begin{equation}\label{MC_C/N}
     C_{c/n}^{(i)} \, := \, e^{-r\tau} \,
     \mathbbm{1}_{ \{  Z^{(i)} \, > \, -\log\frac{S}{K} - (r + \omega_{\sigma,\nu,\theta} ) \tau   \} }  
    \, , \hspace{0.5cm}
      C_{c/n}^{(n)} \, := \, \frac{1}{n} \sum\limits_{i=1}^n C_{c/n}^{(i)}
\end{equation}
where the $Z^{(i)}$ are independent and identically distributed according to the asymmetric VG density
\begin{equation}
        Z^{(i)} \sim \mathrm{VG}\left[ x,\frac{\tau}{\nu}, \frac{1}{\sigma^2} \sqrt{ \frac{2\sigma^2}{\nu} + \theta^2 } ,\frac{\theta}{\sigma^2},0\right ]
        .
\end{equation}
Comparisons between formula \ref{formula:asymmetric_C/N} and the Monte Carlo price \eqref{MC_C/N} are also provided in table \ref{tab:MC}.

\section{Concluding remarks}\label{sec:conclusion}

In this paper, we have provided the reader with several ready-to-use formulas for pricing path-independent options in the Variance Gamma model, with a particular focus on the symmetric model. These formulas, having the form of simple series expansions, can be easily implemented; the convergence is fast and can be made as precise as one wishes. Particularly remarkable is the short term behavior of the series: as it involves functions of powers of the moneyness, short maturities provide a very favorable situation and accelerate the convergence speed. This constitutes an interesting advantage over numerical evaluation of inverse Fourier integrals, which tend to oscillate and have a slower convergence when the time horizon is close to zero. Our approach also performs very well when compared to Monte Carlo methods: to get a confidence interval as precise as the results obtained by our formulas, one would need to simulate an extremely large number of paths, which would become time and resource consuming.

Future works should include the extension of the technology to other kinds of path-independent payoffs (notably on several assets, like correlation or spread options), and to exponential L\'evy models based on generalizations of the Variance Gamma process (e.g., multivariate processes) or on other L\'evy processes. Models based on Normal inverse Gaussian distributions as introduced in \cite{Barndorff-Nielsen98} are particularly interesting, because the density function admits a convenient Mellin-Barnes representation; like in the Variance Gamma case, this opens the way to a factorized option pricing formula and to its evaluation via the residue technique.

The extension of our approach to path-dependent options should also be investigated, notably for Asian options with continuous geometric payoffs. These options have notably been studied in \cite{Hubalek17} from the point of view of the Laplace transform in the framework of affine stochastic volatility models, resulting in a pricing formula in the Laplace space for the average price and the average strike, which may admit an equivalent formulation in the Mellin space. Moreover, the characteristic functions of the geometric average price have been derived explicitly in \cite{Fusai08} for several exponential L\'evy models (CGMY and their particular cases, Normal inverse Gaussian, $\alpha$-stable) for both fixed and floating strikes; these characteristic functions could also be suitable for obtaining a representation in the Mellin space for the probability densities, at least in the $\alpha$-stable case.


\section*{Acknowledgments}

The author thanks an anonymous reviewer for his careful reading of this paper and his valuable comments and suggestions.


\singlespacing

\doublespacing

\appendix

\section{Review of the Mellin transform}\label{app:Mellin}

The theory of the one-dimensional Mellin transform is explained in full detail in \cite{Flajolet95}, and table of Mellin transforms can be found in any monograph on integral transforms (see e.g. \cite{Bateman54}). An introduction to multidimensional complex analysis can be found in the classic textbook \cite{Griffiths78}, and applications to the specific case of Mellin-Barnes integrals are developed in \cite{Passare97,Zhdanov98}. 

\subsection{One-dimensional Mellin transform}

1. The Mellin transform of a locally continuous function $f$ defined on $\mathbb{R}^+$ is the function $f^*$ defined by
\begin{equation}\label{Mellin_def}
    f^*(s) \, := \, \int\limits_0^\infty \, f(x) \, x^{s-1} \, \ud x
    .
\end{equation}
The region of convergence $\{ \alpha < Re (s) < \beta \}$ into which the integral \eqref{Mellin_def} converges is often called the fundamental strip of the transform, and sometimes denoted $ < \alpha , \beta  > $.

\noindent 2. The Mellin transform of the exponential function is, by definition, the Euler Gamma function:
\begin{equation}
    \Gamma(s) \, = \, \int\limits_0^\infty \, e^{-x} \, x^{s-1} \, \ud x
\end{equation}
with strip of convergence $\{ Re(s) > 0 \}$. Outside of this strip, it can be analytically continued, except at every negative $s=-n$ integer where it admits the singular behavior
\begin{equation}\label{sing_Gamma}
    \Gamma(s) \, \underset{s\rightarrow -n}{\sim} \, \frac{(-1)^n}{n!}\frac{1}{s+n} \, ,
    \hspace*{0.5cm} n\in\mathbb{N}
    .
\end{equation}
In table \ref{tab:Mellin} we summarize the main Mellin transforms used in this paper, as well as their convergence strips.
\begin{table}[ht]
 \caption{Mellin pairs used throughout the paper. In all examples,  $Re(a) > 0$.}
 \label{tab:Mellin}       
 \centering
 \begin{tabular}{|c|c|c|}
 \hline
 $f(x)$  &  $f^*(s)$  &  Convergence strip  \\
 \hline   
 $e^{-a x}$       &  $a^{-s} \Gamma(s)$  &  $< 0 , \infty >$  \\
 $e^{-a x} - 1 $  &  $a^{-s} \Gamma(s)$  &  $< -1 , 0 >$  \\
 $K_{\nu}(ax)$    & $a^{-s}2^{s-2} \Gamma\left( \frac{s-\nu}{2} \right)\Gamma\left( \frac{s+\nu}{2} \right)$  &   $ < |Re(\nu)|, \infty > $ \\
 \hline
\end{tabular}
\end{table} 

\noindent 3. The inversion of the Mellin transform is performed via an integral along any vertical line in the strip of convergence:
\begin{equation}\label{inversion}
    f(x) \, = \, \int\limits_{c-i\infty}^{c+i\infty} \, f^*(s) \, x^{-s} \, \frac{\ud s}{2i\pi} \hspace*{1cm} c\in ( \alpha, \beta )
\end{equation}
and notably for the exponential function one gets the so-called \textit{Cahen-Mellin integral}:
\begin{equation}\label{Cahen}
    e^{-x} \, = \, \int\limits_{c-i\infty}^{c+i\infty} \, \Gamma(s) \, x^{-s} \, \frac{\ud s}{2i\pi}, \hspace*{0.5cm} c>0
    .
\end{equation}

\noindent 4. When $f^*(s)$ is a ratio of products of Gamma functions of linear arguments:
\begin{equation}
    f^*(s) \, = \, \frac{\Gamma(a_1 s + b_1) \dots \Gamma(a_m s + b_m)}{\Gamma(c_1 s + d_1) \dots \Gamma(c_l s + d_l)}
\end{equation}
then one speaks of a \textit{Mellin-Barnes integral}, whose \textit{characteristic quantity} is defined to be
\begin{equation}
    \Delta \, = \, \sum\limits_{k=1}^m \, a_k \, - \, \sum\limits_{j=1}^l \, c_j
    .
\end{equation}
$\Delta$ governs the behavior of $f^*(s)$ when $|s|\rightarrow \infty$ and thus the possibility of computing \eqref{inversion} by summing the residues of the analytic continuation of $f^*(s)$ right or left of the convergence strip:
\begin{equation}
    \left\{
    \begin{aligned}
        & \Delta < 0 \hspace*{1cm} f(x) \, = \, -\sum\limits_{Re(s) > \beta} \, \res \left[f^*(s)x^{-s}\right],  \\
        & \Delta > 0 \hspace*{1cm} f(x) \, = \, \sum\limits_{Re(s) < \alpha} \, \res \left[f^*(s)x^{-s}\right]
        .
    \end{aligned}
    \right.
\end{equation}
For instance, in the case of the Cahen-Mellin integral one has $\Delta = 1$ and therefore:
\begin{equation}
e^{-x} \, = \, \sum\limits_{Re(s)<0} \res \left[ \Gamma(s) \, x^{-s} \right] \,  =  \, \sum\limits_{n=0}^{\infty} \, \frac{(-1)^n}{n!}x^n
\end{equation}
as expected from the usual Taylor series of the exponential function.

\subsection{Multidimensional Mellin transform}

1. Let $n\in\mathbb{N}$, and let us denote the vectors in $\mathbb{C}^n$ by $\underline{s}$. Let the $\underline{a}_k$, $\underline{\tilde{a}}_j\in\mathbb{C}^n$, let the $b_k$, $\tilde{b}_j$ be complex numbers and let "." represent the euclidean scalar product. We speak of a Mellin-Barnes integral in $\mathbb{C}^n$ when one deals with an integral of the type
\begin{equation}
     \int\limits_{\underline{c}+i\mathbb{R}^n} \, \omega
\end{equation}
where $\omega$ is a complex differential n-form that reads
\begin{equation}
     \omega \, = \, \frac{\Gamma(\underline{a}_1.\underline{s} + b_1) \dots \Gamma(\underline{a}_m.\underline{s} + b_m)}{\Gamma(\underline{\tilde{a}}_1.\underline{s} + \tilde{b}_1) \dots \Gamma(\underline{\tilde{a}}_l.\underline{s} + \tilde{b}_l)} \, x_1^{-s_1} \, \dots \, x_n^{-s_n} \, \frac{\ud s_1 \dots \ud s_n}{(2i\pi)^n}  \hspace*{1cm} \, x_1, \dots x_n \in\mathbb{R}
     .
\end{equation}
The singular sets induced by the singularities of the Gamma functions
\begin{equation}
     D_k \, := \, \{ \underline{s}\in\mathbb{C}^n \, , \, \underline{a}_k.\underline{} + b_k = -n_k \, , \, n_k \in\mathbb{N}   \} \hspace*{1cm}  k=1 \dots m
\end{equation}
are called the \textit{divisors} of $\omega$. The \textit{characteristic vector} of $\omega$ is defined to be
\begin{equation}
     \Delta \, = \, \sum\limits_{k=1}^m \underline{a}_k \, - \, \sum\limits_{j=1}^l \underline{\tilde{a}}_j
\end{equation}
and the \textit{admissible half-plane}:
\begin{equation}
     \Pi_\Delta \, := \, \{ \underline{s}\in\mathbb{C}^n \, , \, Re( \Delta . \underline{s} ) \, < \, Re( \Delta . \underline{c})  \}
     .
\end{equation}

\noindent 2. Let the $\rho_k$ be real numbers, the $h_k:\mathbb{C}\rightarrow\mathbb{C}$ be linear applications and $\Pi_k$ be a subset of $\mathbb{C}^n$ of the type
\begin{equation}\label{Pik}
    \Pi_k \, := \, \{ \underline{s}\in\mathbb{C}^n, \, Re(h_k(s_k)) \, < \, \rho_k \}
    .
\end{equation}
A \textit{cone} in $\mathbb{C}^n$ is a Cartesian product
\begin{equation}
    \Pi \, = \, \Pi_1 \times \dots \times \Pi_n
    ,
\end{equation}
where the $\Pi_k$ are of the type \eqref{Pik}. Its \textit{faces} $\varphi_k$ are
\begin{equation}
    \varphi_k \, := \, \partial \Pi_k \hspace*{1cm} k=1,\dots , n
\end{equation}
and its \textit{distinguished boundary}, or \textit{vertex} is
\begin{equation}
    \partial_0 \, \Pi \, := \, \varphi_1 \, \cap \, \dots \cap \, \varphi_n
    .
\end{equation}

\noindent 3. Let us now assume $n=2$ for simplicity. We group the divisors $D=\cup_{k=1}^m \, D_k$ of the complex differential form $\omega$ into two sub-families
\begin{equation}
    D_1 \, := \, \cup_{k=1}^{m_0} \, D_k \, , \hspace{0.5cm} D_2 \, := \, \cup_{k=m_0+1}^{m} \, D_k \, , \hspace*{0.5cm}  D \, = \, D_1\cup D_2
    .
\end{equation}
We say that a cone $\Pi\subset\mathbb{C}^2$ is \textit{compatible} with the family of divisors $D$ if:
\begin{enumerate}
\item[-] \, Its distinguished boundary is $\underline{c}$,
\item[-] \, Every divisor $D_1$ and $D_2$ intersects at most one of his faces:
\begin{equation}
    D_1 \, \cap \, \varphi_1 \, = \, D_2 \, \cap \, \varphi_2 \, = \, \emptyset 
    .
\end{equation}
\end{enumerate}

\noindent 4. Residue theorem for multidimensional Mellin-Barnes integral \citep{Passare97,Zhdanov98}: If $\Delta \neq 0$ and if $\Pi\subset\Pi_\Delta$ is a compatible cone located into the admissible half-plane, then
\begin{equation}
     \int\limits_{\underline{c}+i\mathbb{R}^2} \, \omega \, = \, \sum\limits_{\Pi} \mathrm{Res} \, \omega
\end{equation}
and the series is absolutely convergent. The residues are to be understood as the "natural" generalization of the Cauchy residue, that is:
\begin{equation}
    \left. \res \, \left[ f(s_1,s_2) \, \frac{\ud s_1}{2i\pi t_1^{n_1}} \wedge \frac{\ud s_2}{2i\pi s_1^{n_2}}  \right] \, = \, \frac{1}{(n_1-1)!(n_2-1)!}\frac{\partial ^{n_1+n_2-2} f}{\partial s_1^{n_1-1}   \partial s_2^{n_2-1} }\right\vert_{\substack{s_1=0\\s_2=0}}
    .
\end{equation}

\section{Gauss-Laguerre quadrature}\label{app:LG}
The Gauss-Laguerre quadrature (see details e.g. in \cite{Abramowitz72}) is an extension of the Gaussian quadrature rule, allowing to approximate integrals involving an exponential term on the positive real axis:
\begin{equation}
    \int\limits_0^{\infty} \, f(u) e^{-u} \, \ud u \, \simeq \sum\limits_{i=1}^n \, w(u_i) f(u_i)
\end{equation}
where:
\begin{itemize}
    \item[-] the $u_i$, $i=1 \dots n$ are the zeros of the n$^{th}$ Laguerre polynomial $L_n(u)$;
    \item[-] the weights $w(u_i)$ are defined by:
    \begin{equation}
        w(u_i) \, := \, \frac{u_i}{(n+1)^2 L_{n+1}^2(u_i)}
        .
    \end{equation}
\end{itemize}

\begin{table}[ht]
 \caption{Zeros $u_i$ and weights $w(u_i)$ of the Gauss-Laguerre quadrature up to $n=6$.}
 \label{tab:GL_coeffs}       
 \centering
 \begin{tabular}{|cll|}
 \hline
 $n$ & $u_i$   & $w(u_i)$  \\
 \hline
 1 & 1 & 1 \\
 \hline
 2 & 0.585786 & 0.853557 \\
   & 3.41421 & 0.146447 \\
 \hline
 3 & 0.415775 & 0.711088  \\
   & 2.29428 & 0.278518 \\
   & 6.28995 & 0.010389 \\
 \hline
 4 & 0.322548 & 0.603149 \\
   & 1.74576 & 0.35413 \\
   & 4.53662 & 0.038879 \\
   & 9.39507 & 0.000539295 \\
 \hline
 5 & 0.26356 & 0.521763 \\
   & 1.4134 & 0.398676 \\
   & 3.59643 & 0.0759415 \\
   & 7.08581& 0.00361176 \\
   & 12.6408 & 0.00002337 \\
 \hline
  6  & 0.222847 & 0.458954 \\
  & 1.18893 & 0.41701 \\
  & 2.99274 & 0.113372 \\
  & 5.77514 & 0.010399 \\
  & 9.83747 & 0.000261 \\
  & 15.9829 & 8.9852 $\times 10^{-7}$ \\
 \hline
\end{tabular}
\end{table}

The zeros $u_i$ and their associated weights $w(u_i)$ are computed in table \ref{tab:GL_coeffs} up to $n=6$. Starting from the integral formula \eqref{VG_density_1}, \cite{Loregian12} use a Gauss-Laguerre quadrature to re-write the VG density function; if
\begin{equation}
    \varphi(x,\alpha,\beta) \, := \, \frac{1}{\beta\sqrt{2\pi}} \, e^{- \frac{(x-\alpha)^2}{2\beta^2} }
\end{equation}
is the Gaussian density and if we introduce
\begin{equation}
    p \left(u_i, t \right) \, := \, \frac{ w(u_i)u_i^{ t -1} }{ \sum_{i=1}^n w(u_i)u_i^{ t -1} }
    ,
\end{equation}
then the VG density \eqref{VG_density_1} can be approximated by
\begin{equation}\label{VG_density_LG_approx}
    \overline{f}_{\sigma,\nu,\theta} (x,t) \, := \, \sum\limits_{i=1}^n \, \varphi(x , \theta \nu u_i , \sigma\sqrt{\nu u_i} ) \, p \left(u_i,\frac{t}{\nu} \right)
    .
\end{equation}
The martingale adjustment can be approximated by:
\begin{equation}\label{omega_LG_approx}
    \overline{\omega}_{\sigma,\nu,\theta} \, := \, -\frac{1}{t} \, \log \, \sum\limits_{i=1}^n  \, p \left(u_i,\frac{t}{\nu} \right) \, e^{ \theta\nu u_i + \frac{\sigma^2}{2} \nu u_i  }
\end{equation}
and the European call price can be written down as:
\begin{multline}\label{European_LG_approx}
   \mathbb{E}_t^{\mathbb{Q}} \, [  e^{-r\tau} [S_T - K]^+ ] \,
    \simeq
    \\
    S \, \sum\limits_{i=1}^n e^{ \theta\nu u_i + \overline{\omega}_{\sigma,\nu,\theta} \tau + \frac{\sigma^2}{2} \nu u_i } N(d_1(u_i)) p \left(u_i,\frac{\tau}{\nu} \right) 
    \, - \,
    Ke^{-r\tau} \, \sum\limits_{i=1}^n N(d_2(u_i)) \,  p \left(u_i,\frac{\tau}{\nu} \right) 
    ,
\end{multline}
where $x\rightarrow N(x)$ is the Normal cumulative distribution function, and where
\begin{equation}
    \left\{
    \begin{aligned}
    & d_1(u_i) :=   \frac{ \log\frac{S}{K} + (r + \overline{\omega}_{\sigma,\nu,\theta}) \tau + ( \sigma^2 + \theta ) \nu u_i }{ \sigma\sqrt{\nu u_i} }    \\
    & d_2(u_i) := d_1(u_i) - \sigma\sqrt{\nu u_i}
    .
    \end{aligned}
    \right.
\end{equation}

\end{document}